\documentclass[journal,romanappendices]{IEEEtran}
\usepackage{enumitem}
\usepackage{mathpple}
\usepackage{times}
\usepackage{amsmath,amsthm}  
\usepackage{amssymb}  
\usepackage[mathscr]{eucal}
\usepackage{cite}     
\usepackage{comment}  
\usepackage{upref}
\usepackage{amsfonts}
\usepackage{verbatim}
\usepackage[dvipsnames,usenames]{color}
\usepackage{enumerate}
\usepackage{graphicx}
\usepackage{latexsym}
\usepackage{bm}
\usepackage{multirow}
\usepackage{dsfont}
\usepackage{amsthm,xpatch}
\usepackage{mathtools}
\usepackage{bbm}
\usepackage{latexsym}
\usepackage{accents}
\usepackage{eufrak}
\usepackage{psfrag}
\usepackage{color}
\usepackage{times,color}
\usepackage[usenames,dvipsnames,svgnames,table]{xcolor}
\usepackage{pgf}
\usepackage{tikz}
\usetikzlibrary{arrows, automata, positioning, calc}
\usepackage{cancel} 
\usepackage{caption}
\usepackage{subcaption}
\usepackage{rotating}
\usepackage[T1]{fontenc}
\usepackage{anyfontsize}
\usepackage[hyphens]{url}
\usepackage{hyperref}

\newcommand\bovermat[2]{%
  \makebox[0pt][l]{$\smash{\overbrace{\phantom{%
    \begin{matrix}#2\end{matrix}}}^{\text{#1}}}$}#2}

\newcommand\bundermat[2]{%
  \makebox[0pt][l]{$\smash{\underbrace{\phantom{%
    \begin{matrix}#2\end{matrix}}}_{\text{#1}}}$}#2}

\newcommand{\myLambda}{\begin{sideways}%
     \begin{sideways}$\mathrm{V}$\end{sideways}\end{sideways}}



\newtheorem{lemma}{Lemma}
\newtheorem{theorem}{Theorem}
\newtheorem{corollary}{Corollary}
\newtheorem{remark}{Remark}

\usepackage{algorithm,algpseudocode}

\makeatletter
\newcommand{\removelatexerror}{\let\@latex@error\@gobble}
\makeatletter
\newcommand{\proofpart}[2]{%
	\par
	\addvspace{\medskipamount}%
	\noindent\emph{Part #1: #2}\par\nobreak
	\addvspace{\smallskipamount}%
	\@afterheading
}
\makeatother

\makeatletter
\newcommand*{\transpose}{%
  {\mathpalette\@transpose{}}%
}
\newcommand*{\@transpose}[2]{%
  \raisebox{\depth}{$\m@th#1\intercal$}%
}
\makeatother

\renewcommand{\mathsf}[1]{#1}

\theoremstyle{definition}
\newtheorem{example}{Example}

\setcounter{MaxMatrixCols}{30}

\IEEEoverridecommandlockouts

\begin{document}

\newcommand{\SB}[3]{
\sum_{#2 \in #1}\biggl|\overline{X}_{#2}\biggr| #3
\biggl|\bigcap_{#2 \notin #1}\overline{X}_{#2}\biggr|
}

\newcommand{\Mod}[1]{\ (\textup{mod}\ #1)}

\newcommand{\overbar}[1]{\mkern 0mu\overline{\mkern-0mu#1\mkern-8.5mu}\mkern 6mu}

\makeatletter
\newcommand*\nss[3]{%
  \begingroup
  \setbox0\hbox{$\m@th\scriptstyle\cramped{#2}$}%
  \setbox2\hbox{$\m@th\scriptstyle#3$}%
  \dimen@=\fontdimen8\textfont3
  \multiply\dimen@ by 4             
  \advance \dimen@ by \ht0
  \advance \dimen@ by -\fontdimen17\textfont2
  \@tempdima=\fontdimen5\textfont2  
  \multiply\@tempdima by 4
  \divide  \@tempdima by 5          
  \ifdim\dimen@<\@tempdima
    \ht0=0pt                        
    \@tempdima=\fontdimen5\textfont2
    \divide\@tempdima by 4          
    \advance \dimen@ by -\@tempdima 
    \ifdim\dimen@>0pt
      \@tempdima=\dp2
      \advance\@tempdima by \dimen@
      \dp2=\@tempdima
    \fi
  \fi
  #1_{\box0}^{\box2}%
  \endgroup
  }
\makeatother

\makeatletter
\renewenvironment{proof}[1][\proofname]{\par
  \pushQED{\qed}%
  \normalfont \topsep6\p@\@plus6\p@\relax
  \trivlist
  \item[\hskip\labelsep
        \itshape
    #1\@addpunct{:}]\ignorespaces
}{%
  \popQED\endtrivlist\@endpefalse
}
\makeatother

\makeatletter
\newsavebox\myboxA
\newsavebox\myboxB
\newlength\mylenA

\newcommand*\xoverline[2][0.75]{%
    \sbox{\myboxA}{$\m@th#2$}%
    \setbox\myboxB\null
    \ht\myboxB=\ht\myboxA%
    \dp\myboxB=\dp\myboxA%
    \wd\myboxB=#1\wd\myboxA
    \sbox\myboxB{$\m@th\overline{\copy\myboxB}$}
    \setlength\mylenA{\the\wd\myboxA}
    \addtolength\mylenA{-\the\wd\myboxB}%
    \ifdim\wd\myboxB<\wd\myboxA%
       \rlap{\hskip 0.5\mylenA\usebox\myboxB}{\usebox\myboxA}%
    \else
        \hskip -0.5\mylenA\rlap{\usebox\myboxA}{\hskip 0.5\mylenA\usebox\myboxB}%
    \fi}
\makeatother

\xpatchcmd{\proof}{\hskip\labelsep}{\hskip3.75\labelsep}{}{}

\pagestyle{plain}

\title{\fontsize{21}{28}\selectfont Single-Server Private Linear Transformation:\\ The Individual Privacy Case}
\author{Anoosheh Heidarzadeh, \emph{Member, IEEE}, Nahid Esmati, \emph{Student Member, IEEE}, \\ and Alex Sprintson, \emph{Senior Member, IEEE}\thanks{This work is to be presented in part at the 2021 IEEE International Symposium on Information Theory, Melbourne, Australia, July 2021.}\thanks{The authors are with the Department of Electrical and Computer Engineering, Texas A\&M University, College Station, TX 77843 USA (E-mail: \{anoosheh, nahid, spalex\}@tamu.edu).}}


%


\maketitle 

\thispagestyle{plain}
\begin{abstract}


This paper considers the single-server Private Linear Transformation (PLT) problem with  individual privacy guarantees. In this problem, there is a user that wishes to obtain $L$ independent linear combinations of a $D$-subset of messages belonging to a dataset of $K$ messages stored on a single server. The goal is to minimize the download cost while keeping the identity of each message required for the computation individually private. The individual privacy requirement ensures that the identity of each individual message required for the computation is kept private. 
This is in contrast to the stricter notion of joint privacy that protects the entire set of identities of all messages used for the computation, including the correlations between these identities. The notion of individual privacy captures a broad set of practical applications. For example, such notion is relevant when the dataset contains information about individuals, each of them requires privacy guarantees for their data access patterns. 
    
We focus on the setting in which the required linear transformation is associated with a maximum distance separable (MDS) matrix. In particular, we require that the matrix of coefficients pertaining to the required linear combinations is the generator matrix of an MDS code. We establish lower and upper bounds on the capacity of PLT with individual privacy, where the capacity is defined as the supremum of all achievable download rates. We show that our bounds are tight under certain conditions. 
\end{abstract}

\begin{IEEEkeywords}
Individual Privacy, Private Information Retrieval, Private Function Computation, Single Server, Linear Transformation, Maximum Distance Separable Codes. 
\end{IEEEkeywords}

\section{introduction}
In this work, 
we study the problem of single-server  \emph{Private Linear Transformation (PLT) with individual privacy}, referred to as \emph{IPLT} for short. 
In this problem, there is a single server that stores a set of $K$ messages, and a user that wants to compute $L$ independent linear combinations of a subset of $D$ messages. 
The objective of the user is to recover the required linear combinations by downloading minimum possible amount of information from the server, while 
protecting the identity of each message required for the computation individually.
More specifically, the individual privacy requirement implies that, from the server's perspective, every message must be equally likely \emph{a posteriori} to belong to the support set of the $L$ required linear combinations, assuming that all $D$-subsets of messages are \emph{a priori} equiprobable to be the support set of the $L$ required linear combinations.

This setup appears in several practical scenarios including Machine Learning (ML) applications such as linear transformation for dimensionality reduction~\cite{CG2015}, and parallel training of different linear regression or classification models~\cite{A2019,JY2020}. 
For example, consider a scenario in which the server stores a dataset with $N$ data samples each with $K$ attributes, and the $N$ data samples for each attribute represent one message. 
The user would like to run an ML algorithm on a subset of $D$ selected attributes, but they wish to hide the identity of each of the selected attributes individually.
For instance, each attribute may correspond to an individual, and the user is required to hide from the server whether the information belonging to an individual was used. 
When $D$ is large, it is beneficial is to reduce the $D$-dimensional feature space into a smaller space of dimension $L$. 
This dimensionality reduction can be performed by a linear transformation. 
In this case, instead of retrieving the $D$ messages corresponding to the selected attributes, only $L$ linear combinations need to be retrieved.
Retrieving these linear combinations while protecting the privacy of each of the selected attributes, matches the setup of the IPLT problem.

The notion of individual privacy was originally introduced in~\cite{HKRS2019} for single-server Private Information Retrieval (PIR) with individual privacy guarantees (or IPIR), and was recently considered for single-server Private Linear Computation (PLC) with individual privacy guarantees (or IPLC) in~\cite{HS2020}.
The IPLT problem generalizes the IPIR and IPLC problems. 
In particular, the IPLT problem reduces to the IPIR problem or the IPLC problem when $L=D$ or $L=1$, respectively.
The IPLT problem is also related to the problem of single-server PLT with joint privacy guarantees (or JPLT for short), which we have studied in a parallel work~\cite{HES2021JointJournal}. 
The notion of joint privacy was previously considered for the problems of multi-message PIR~\cite{BU2018,HKGRS2018,LG2018,MMM2019} and single-server PLC~\cite{HS2019PC,HS2020}.
The joint privacy condition implies that, 
from the server's perspective, 
every $D$-subset of messages must be equally likely \emph{a posteriori} to be the support set of the $L$ required linear combinations. 
It is easy to see that individual privacy is weaker than joint privacy. 
That said, individual privacy has an interesting operational meaning per se, and is motivated by the need to protect the access pattern for individual (rather than the entire set of) messages required for the computation. 

The joint and individual privacy guarantees are applicable to the scenarios in which the data access patterns need to be protected. 
Note that these types of access privacy are different from the privacy requirements for the multi-server PLC problem in~\cite{SJ2018,MM2018,OK2018,OLRK2018} and the multi-server Private Monomial Computation problem~\cite{YLR2020}. 
In particular, the privacy requirement in~\cite{SJ2018,MM2018,OK2018,OLRK2018} is to hide the values of the combination coefficients in the required linear combination; and 
the privacy requirement in~\cite{YLR2020} is to hide the values of the exponents in the required monomial function. 
The IPIR and IPLC problems were previously studied in the settings in which the user has a prior side information about a subset of messages. 
As was shown in~\cite{HKRS2019,HS2020}, when compared to single-server PIR with joint privacy guarantees (or JPIR) and PLC with joint privacy guarantees (or JPLC)~\cite{HKGRS2018,LG2018,HS2019PC}, IPIR and IPLC can be performed with a much lower download cost. 
Motivated by these results, this work seeks to answer the following questions:
(i) when there is no prior side information, is it possible to perform IPLT with a lower download cost than JPLT? 
(ii) what are the fundamental limits on the download cost for IPLT? 
In this work, we make a significant progress towards answering these questions.

\subsection{Main Contributions}


In this work, we focus on the setting in which the coefficient matrix corresponding to the required linear combinations is the generator matrix of a maximum distance separable (MDS) code. 
The MDS matrices are motivated by the scenarios in which the combination coefficients are judiciously chosen to form an MDS matrix, 
or they are randomly generated over the field of reals (or a sufficiently large finite field), and form an MDS matrix with probability $1$ (or with high probability). 

We establish bounds on the capacity of IPLT, where the capacity is defined as the supremum of all achievable download rates. 
In particular, we prove an upper bound on the capacity using a novel converse proof technique which relies on several linear-algebraic and information-theoretic arguments. 
Using this technique, we formulate the problem of upper bounding the capacity as an integer linear programming (ILP) problem. 
Solving this ILP, we obtain an upper bound on the capacity. 
We also prove a lower bound on the capacity by designing an achievability scheme, termed \emph{Generalized Partition-and-Code with Partial Interference Alignment (GPC-PIA) protocol}. 
This protocol generalizes the protocols we recently proposed in~\cite{HKRS2019} and~\cite{HS2020} for the IPIR problem and the IPLC problem, respectively.
In addition, we show that our bounds are tight under certain conditions, particularly if $R=K \pmod D\leq L$, or $R$ divides $D$, settling the capacity of IPLT for such cases. 
Our results show that 
(i) for a wide range of values of $K,D,L$, the capacity of IPLT is higher than that of JPLT, 
i.e., IPLT can be performed more efficiently than JPLT in terms of the download cost; and 
(ii) for some other range of values of $K,D,L$, the capacity of IPLT and JPLT are the same, 
i.e., IPLT is as costly as JPLT in terms of the download cost.   



\subsection{Notation}
Throughout, we denote random variables and their realizations by bold-face symbols and regular symbols, respectively. 
We also denote sets, vectors, and matrices by roman font, and collections of sets, vectors, or matrices by blackboard bold roman font. 
For any random variables $\mathbf{X},\mathbf{Y}$, we denote the entropy of $\mathbf{X}$ and the conditional entropy of $\mathbf{X}$ given $\mathbf{Y}$ by $H(\mathbf{X})$ and $H(\mathbf{X}|\mathbf{Y})$ respectively. 
For any integer $n\geq 1$, we denote $\{1,\dots,n\}$ by $[n]$, and for any integers $1<n<m$, we denote $\{n,\dots,m\}$ by $[n:m]$.  
We denote the binomial coefficient $\binom{n}{k}$ by $C_{n,k}$.
For any positive integers $a,b$, we write $a\mid b$ (or $a\nmid b$) if $a$ divides $b$ (or $a$ does not divide $b$).







\section{Problem Setup}\label{sec:SN}






\subsection{Models and Assumptions}
Let $q$ be an arbitrary prime power, and let $N\geq 1$ be an arbitrary integer. 
Let $\mathbbmss{F}_q$ be a finite field of order $q$, and let $\mathbbmss{F}_{q}^{N}$ be the $N$-dimensional vector space over $\mathbbmss{F}_q$.
Let $B\triangleq N\log_2 q$. 
Let $K,D,L\geq 1$ be integers such that ${L\leq D\leq K}$. 
We denote by $\mathbbm{W}$ the set of all $D$-subsets of $[K]$, and denote by $\mathbbm{V}$ the set of all $L\times D$ MDS matrices with entries in $\mathbbmss{F}_q$.
\footnote{For any $1\leq k\leq n$, a $k\times n$ matrix $\mathrm{M}$ is said to be maximum distance separable (MDS) if $\mathrm{M}$ generates an $[n,k]$ MDS code. 
Equivalently, a $k\times n$ matrix $\mathrm{M}$ is said to be MDS if every $k\times k$ submatrix of $\mathrm{M}$ is invertible.} 

Suppose that there is a server that stores $K$ messages ${X_1,\dots,X_K}$, where $X_i\in \mathbbmss{F}_q^{N}$ for $i\in [K]$ is a row-vector of length $N$. 
Let ${\mathrm{X}\triangleq [X_1^{\transpose},\dots,X_K^{\transpose}]^{\transpose}}$ be the $K\times N$ matrix of messages.
For every ${\mathrm{S}\subset [K]}$, we denote by $\mathrm{X}_{\mathrm{S}}$ the matrix $\mathrm{X}$ restricted to its rows indexed by $\mathrm{S}$. 
Suppose that there is a user who wants to compute 
the $L\times N$ matrix $\mathrm{Z}^{[\mathrm{W},\mathrm{V}]}\triangleq \mathrm{V}\mathrm{X}_{\mathrm{W}}$, where ${\mathrm{W}\in \mathbbm{W}}$ and ${\mathrm{V}\in \mathbbm{V}}$. 
The $L$ rows of the matrix $\mathrm{Z}^{[\mathrm{W},\mathrm{V}]}$ are given by $\mathrm{v}_1 \mathrm{X}_{\mathrm{W}},\dots,\mathrm{v}_L \mathrm{X}_{\mathrm{W}}$, where $\mathrm{v}_l$ for $l\in [L]$ is the $l$th row of the $L\times D$ matrix $\mathrm{V}$, i.e., $\mathrm{V}=[\mathrm{v}^{\transpose}_1,\dots,\mathrm{v}^{\transpose}_L]^{\transpose}$. 
Note that $\mathrm{V} \mathrm{X}_{\mathrm{W}}$ corresponds to $L$ MDS coded linear combinations of the $D$ messages indexed by $\mathrm{W}$ where the combination coefficients are specified by the MDS matrix $\mathrm{V}$.
We refer to $\mathrm{Z}^{[\mathrm{W},\mathrm{V}]}$ as the \emph{demand}, $\mathrm{W}$ as the \emph{support of the demand}, $\mathrm{V}$ as the \emph{coefficient matrix of the demand}, $D$ as the \emph{support size of the demand}, and $L$ as the \emph{dimension of the demand}.

In this work, we assume that (i) $\mathbf{X}_1,\dots,\mathbf{X}_K$ are independently and uniformly distributed over $\mathbbmss{F}_{q}^{N}$. 
This implies that $H(\mathbf{X})=KB$, ${H(\mathbf{X}_{\mathrm{S}})= |\mathrm{S}| B}$ for every ${\mathrm{S}\subset [K]}$, and $H(\mathbf{Z}^{[\mathrm{W},\mathrm{V}]})=LB$; (ii) the random variables $\mathbf{W}, \mathbf{V}, \mathbf{X}$ are independent; (iii) $\mathbf{W}$ and $\mathbf{V}$ are distributed uniformly over ${\mathbbm{W}}$ and $\mathbbm{V}$, respectively; (iv) the parameters $D$ and $L$, and the distribution of $(\mathbf{W},\mathbf{V})$ are initially known by the server; and (v) the realization $(\mathrm{W},\mathrm{V})$ is not initially known by the server.



\subsection{Privacy and Recoverability Conditions}
Given $(\mathrm{W}$,$\mathrm{V})$, the user generates a query $\mathrm{Q} = \mathrm{Q}^{[\mathrm{W},\mathrm{V}]}$, which is a (deterministic or stochastic) function of $(\mathrm{W},\mathrm{V})$, and sends $\mathrm{Q}$ to the server. 
The query $\mathrm{Q}$ 
must satisfy the following privacy condition: given the query $\mathrm{Q}$, 
every individual message index must be equally likely to belong to the demand's support. 
That is, for every $i\in [K]$, it must hold that
\begin{equation*}
\Pr (i\in \mathbf{W}|\mathbf{Q}=\mathrm{Q})=\Pr(i \in \mathbf{W})=D/K, 
\end{equation*} where $\mathbf{Q}$ denotes $\mathbf{Q}^{[\mathbf{W},\mathbf{V}]}$. 
This condition---which was recently introduced in~\cite{HKRS2019} and~\cite{HS2020} for single-server PIR and PLC, is referred to as the \emph{individual privacy condition}. 

Upon receiving the query $\mathrm{Q}$, the server generates an answer $\mathrm{A}=\mathrm{A}^{[\mathrm{W},\mathrm{V}]}$, 
and sends it back to the user. 
The answer $\mathrm{A}$ is a deterministic function of $\mathrm{Q}$ and $\mathrm{X}$.
That is, $H(\mathbf{A}|\mathbf{Q},\mathbf{X})=0$, where $\mathbf{A}$ denotes $\mathbf{A}^{[\mathbf{W},\mathbf{V}]}$. 
The collection of the answer $\mathrm{A}$, the query $\mathrm{Q}$, and the realization $(\mathrm{W},\mathrm{V})$, must enable the user to recover the demand $\mathrm{Z}^{[\mathrm{W},\mathrm{V}]}$. 
That is, 
\[{H(\mathbf{Z}| \mathbf{A},\mathbf{Q}, \mathbf{W},\mathbf{V})=0},\] where $\mathbf{Z}$ denotes $\mathbf{Z}^{[\mathbf{W},\mathbf{V}]}$. This condition is referred to as the \emph{recoverability condition}. 
 
\subsection{Problem Statement} 
We would like to design a protocol for generating a query $\mathrm{Q}^{[\mathrm{W},\mathrm{V}]}$ and the corresponding answer $\mathrm{A}^{[\mathrm{W},\mathrm{V}]}$ for any given $(\mathrm{W},\mathrm{V})$, 
such that the individual privacy and recoverability conditions are satisfied.
We refer to this problem as single-server \emph{Private Linear Transformation (PLT) with Individual Privacy}, or \emph{IPLT} for short. 

We say that a protocol is \emph{deterministic} (or \emph{randomized}) if the user's query is a deterministic (or stochastic) function of $(\mathrm{W}$,$\mathrm{V})$. 
Also, we say that a protocol is \emph{linear} if the server's answer contains only linear combinations of the messages; otherwise, the protocol is said to be \emph{non-linear}.

Following the convention in the PIR and PLC literature, we define the \emph{rate} of an IPLT protocol as the ratio of the entropy of the demand (i.e., $H(\mathbf{Z})=LB$) to the entropy of the answer (i.e., $H(\mathbf{A})$). 
We also define the \emph{capacity} of IPLT as the supremum of rates over all IPLT protocols and over all field sizes $q$. 
In this work, our goal is to establish (tight) bounds (in terms of $K, D, L$) on the capacity of IPLT. 


\section{Main Results}
In this section, we summarize our main results on the capacity of the IPLT setting. 
Theorems~\ref{thm:IPLT-UB} and~\ref{thm:IPLT-LB} present an upper bound and a lower bound on the capacity, respectively, and 
Corollary~\ref{cor:IPLT-Cap} characterizes the capacity under certain conditions, depending on $K,D,L$. 
The proofs of Theorems~\ref{thm:IPLT-UB} and~\ref{thm:IPLT-LB} are given in Sections~\ref{sec:IPLT-Conv} and~\ref{sec:IPLT-Ach}, respectively.   

For simplifying the notation, we define $R\triangleq K \pmod D$ and $S\triangleq \gcd(D+R,R)$, and use the notations $R,S$ along with the basic notations $K,D,L$ everywhere.  

\begin{theorem}\label{thm:IPLT-UB}
For the IPLT setting with $K$ messages, demand's support size $D$, and demand's dimension $L$, the capacity is upper bounded by 
\begin{equation}\label{eq:UB}
\left(\left\lfloor \frac{K}{D}\right\rfloor+\min\left\{1,\frac{R}{L}\right\}\right)^{-1}.    
\end{equation} 
\end{theorem}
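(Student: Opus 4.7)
The plan is to establish $H(\mathbf{A}) \ge (L\lfloor K/D\rfloor+\min\{L,R\})B$, which, upon division by $H(\mathbf{Z})=LB$, is equivalent to the claimed upper bound on the rate. Setting $M\triangleq\lfloor K/D\rfloor$, I would start from $H(\mathbf{A})\ge H(\mathbf{A}\mid\mathbf{Q})$ and reduce to a per-query lower bound: for each realization $\mathrm{Q}$ in the support of $\mathbf{Q}$, the goal is to show $H(\mathbf{A}\mid\mathbf{Q}=\mathrm{Q})\ge (LM+\min\{L,R\})B$. The recoverability condition, combined with the fact that $\mathbf{A}$ is a deterministic function of $(\mathbf{Q},\mathbf{X})$, implies that for every pair $(\mathrm{W},\mathrm{V})$ in the conditional support of $(\mathbf{W},\mathbf{V})$ given $\mathbf{Q}=\mathrm{Q}$, each of the $L$ rows of $\mathrm{V}\mathrm{X}_{\mathrm{W}}$ is a deterministic linear functional of $\mathrm{X}$ that is recoverable from $\mathbf{A}$ alone. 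Letting $\mathcal{L}_{\mathrm{Q}}$ be the span of all such forms and using that $\mathrm{X}$ is uniform over $\mathbbmss{F}_q^{KN}$, we obtain $H(\mathbf{A}\mid\mathbf{Q}=\mathrm{Q})\ge d_{\mathrm{Q}}B$, where $d_{\mathrm{Q}}\triangleq\dim\mathcal{L}_{\mathrm{Q}}$.

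Next, I would apply individual privacy to constrain the set $\mathbbm{W}_{\mathrm{Q}}\subseteq\mathbbm{W}$ of supports $\mathrm{W}$ appearing in the conditional support above. The condition $\Pr(i\in\mathbf{W}\mid\mathbf{Q}=\mathrm{Q})=D/K>0$ for every $i\in[K]$ forces $\bigcup_{\mathrm{W}\in\mathbbm{W}_{\mathrm{Q}}}\mathrm{W}=[K]$ and, more strongly, endows $\mathbbm{W}_{\mathrm{Q}}$ with a uniform fractional perfect $D$-cover of local density $D/K$, so that its fractional matching number is at least $K/D$.

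The crux of the argument is a linear-algebraic lower bound on $d_{\mathrm{Q}}$. The plan is to extract supports $\mathrm{W}_1,\dots,\mathrm{W}_M\in\mathbbm{W}_{\mathrm{Q}}$ that are pairwise disjoint and, when $R>0$, one further $\mathrm{W}_{M+1}\in\mathbbm{W}_{\mathrm{Q}}$ that contains the $R$ residual indices $[K]\setminus(\mathrm{W}_1\cup\cdots\cup\mathrm{W}_M)$. Choosing any consistent MDS matrix $\mathrm{V}_j$ for each $\mathrm{W}_j$, the disjointness of $\mathrm{W}_1,\dots,\mathrm{W}_M$ forces the $LM$ rows of the products $\mathrm{V}_j\mathrm{X}_{\mathrm{W}_j}$ to be linearly independent, contributing a subspace $U\subseteq\mathcal{L}_{\mathrm{Q}}$ of dimension $LM$ supported on $\mathrm{W}_1\cup\cdots\cup\mathrm{W}_M$. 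Since the MDS property of $\mathrm{V}_{M+1}$ guarantees that every $L\times L$ submatrix is invertible, the restriction of $\mathrm{V}_{M+1}$ to its $R$ residual columns has rank $\min\{L,R\}$; projecting the $L$ rows of $\mathrm{V}_{M+1}\mathrm{X}_{\mathrm{W}_{M+1}}$ onto the residual coordinates---on which $U$ has zero support---then yields $\min\{L,R\}$ directions independent of $U$, giving $d_{\mathrm{Q}}\ge LM+\min\{L,R\}$. To make this argument uniform in $\mathrm{Q}$, I would encode the extraction as an integer linear program whose variables indicate inclusion of consistent $(\mathrm{W},\mathrm{V})$-pairs, whose constraints reflect the privacy-induced fractional cover together with the disjointness structure, and whose objective lower bounds $d_{\mathrm{Q}}$; proving that the ILP optimum equals $LM+\min\{L,R\}$ then closes the argument after taking expectation over $\mathrm{Q}$.

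The main obstacle I anticipate is that, for queries $\mathrm{Q}$ in which $\mathbbm{W}_{\mathrm{Q}}$ has no $M$ pairwise disjoint $D$-subsets, the naive greedy extraction above fails. Individual privacy only forces the fractional matching number to reach $K/D$, and this can strictly exceed the integer matching number (as in Steiner-triple-system configurations, e.g., the Fano plane for $K=7, D=3$). In such degenerate cases, the dense pattern of pairwise overlaps forced by the privacy cover condition, combined with the MDS property of every consistent coefficient matrix, must instead be leveraged directly to show that $d_{\mathrm{Q}}$ is in fact strictly larger than $LM+\min\{L,R\}$. Subsuming both regimes under a single ILP formulation and unconditionally establishing its optimum is the delicate linear-algebraic step where the novelty of the converse technique lies.
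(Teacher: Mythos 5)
Your skeleton matches the paper's: derive from privacy and recoverability that every index $i$ lies in some support $\tilde{\mathrm{W}}$ whose demand is recoverable from $(\mathbf{A},\mathbf{Q})$, lower-bound the entropy of the answer by the number of independent recoverable linear forms, use the MDS property to count the independent directions contributed by each support, and package the counting as an ILP. The linear-algebra step you describe for the residual indices (the restriction of an MDS matrix to $R$ columns has rank $\min\{L,R\}$, and the previously collected forms vanish on those coordinates) is exactly the paper's argument. However, the obstacle you flag at the end is a genuine gap, and your proposed way around it points in the wrong direction. You build the bound on extracting $M=\lfloor K/D\rfloor$ \emph{pairwise disjoint} supports plus one residual support, correctly observe that individual privacy only guarantees a fractional cover so such a disjoint family need not exist, and then suggest that in the degenerate overlapping case one must show $d_{\mathrm{Q}}$ is \emph{strictly larger} than $LM+\min\{L,R\}$. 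You do not supply that argument, and it is not the right target: no strict improvement is needed, and none is available in general.

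The paper closes this gap by never requiring disjointness. It runs a greedy covering process: at round $i$ pick any index $k_i$ not yet covered, invoke the privacy/recoverability lemma to get $\mathrm{W}_i\ni k_i$ and $\mathrm{V}_i$ with $H(\mathbf{Z}_i|\mathbf{A},\mathbf{Q})=0$, and set $N_i\triangleq|\mathrm{W}_i\setminus\cup_{j<i}\mathrm{W}_j|\geq 1$. The chain rule gives $H(\mathbf{A})\geq\sum_i H(\mathbf{Z}_i|\mathbf{Z}_{i-1},\dots,\mathbf{Z}_1)$, and the MDS property of $\mathrm{V}_i$ restricted to the $N_i$ \emph{new} coordinates (on which all earlier forms are zero) yields $H(\mathbf{Z}_i|\mathbf{Z}_{<i})\geq\min\{L,N_i\}B$ — the contribution degrades gracefully with the amount of overlap rather than being lost. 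Since the increments partition $[K]$, the constraints are $N_1=D$, $1\leq N_i\leq D$, $\sum_i N_i=K$, and the ILP (in the variables $T_j$ counting rounds with $N_i=j$) minimizing $\sum_i\min\{L,N_i\}$ has optimum exactly $L\lfloor K/D\rfloor+\min\{L,R\}$, attained by $T_D=\lfloor K/D\rfloor$, $T_R=1$. In your Fano-plane example ($K=7$, $D=3$) three overlapping lines give increments $3+2+2$, and $\min\{L,3\}+\min\{L,2\}+\min\{L,2\}\geq 2L+\min\{L,1\}$ holds with no need for strictness. To repair your write-up you should replace the disjoint-extraction step with this sequential increment argument; the rest of your outline then goes through.
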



To prove this result, we a mix of information-theoretic and linear-algebraic arguments which rely on the individual privacy and recoverability conditions, and form an integer linear programming (ILP) problem. 
Solving this ILP, we obtain the upper bound~\eqref{eq:UB} on the capacity. 

\begin{theorem}\label{thm:IPLT-LB}
For the IPLT setting with $K$ messages, demand's support size $D$, and demand's dimension $L$, the capacity is lower bounded by
\begin{equation}\label{eq:LB}
\left(\left\lfloor \frac{K}{D}\right\rfloor+\min\left\{\frac{R}{S},\frac{R}{L}\right\}\right)^{-1}.    
\end{equation}
\end{theorem}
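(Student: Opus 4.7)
\medskip
\noindent\emph{Proof plan for Theorem~\ref{thm:IPLT-LB}.}
Since the statement is a lower bound on the capacity, my plan is constructive: I would exhibit an explicit IPLT protocol achieving rate $(\lfloor K/D\rfloor+\min\{R/S,R/L\})^{-1}$, generalizing the partition-and-code schemes developed for IPIR ($L=D$) in~\cite{HKRS2019} and for IPLC ($L=1$) in~\cite{HS2020}. Call the proposed protocol the GPC-PIA protocol, as indicated in the paper. At the top level, the scheme picks a random partition of $[K]$ into $\lfloor K/D\rfloor$ blocks of size $D$ together with one remainder block of size $R$ (together with a uniformly random relabeling of the messages); downloads $L$ MDS-coded combinations from each size-$D$ block; and downloads a controlled number of additional combinations involving the remainder block, chosen so that the MDS structure of $\mathrm{V}$ permits lifting out the $L$ desired combinations $\mathrm{V}\mathrm{X}_{\mathrm{W}}$ via partial interference alignment.

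The protocol design naturally splits into two regimes according to which term attains the minimum. In \emph{Regime~A}, when $L\geq S$, the minimum is $R/L$, and the target total download is $\lfloor K/D\rfloor L+R$ scalar symbols. Here my plan is to position the support $\mathrm{W}$ so that it occupies exactly one of the size-$D$ blocks (so $L$ of the downloads directly yield $\mathrm{V}\mathrm{X}_{\mathrm{W}}$ after a linear transform that exists by the MDS property of $\mathrm{V}$), while the remainder block is handled by $R$ extra combinations that cancel residual interference. In \emph{Regime~B}, when $L<S$, the minimum is $R/S$, and the target becomes $\lfloor K/D\rfloor L+RL/S$. Because $S=\gcd(D,R)$, both $D$-blocks and the $R$-block can be further subdivided into sub-blocks of size $S$; over these sub-blocks one can amortize the interference-alignment cost so that only a $L/S$-fraction of the $R$ remainder symbols needs to be downloaded per demand instance. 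Concretely, I would use Reed--Solomon--type coefficients over a sufficiently large $\mathbb{F}_q$ so that the resulting decoding matrix is a structured Vandermonde-like system, and invoke the MDS property of $\mathrm{V}$ together with the MDS property of the block coefficients to argue invertibility.

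The proof then consists of four steps, executed in each regime: (i) formally define the randomized partition/labeling and the query $\mathrm{Q}^{[\mathrm{W},\mathrm{V}]}$; (ii) verify the individual privacy condition by checking $\Pr(i\in\mathbf{W}\mid\mathbf{Q}=\mathrm{Q})=D/K$ for every $i\in[K]$, which follows readily from the symmetry built into the random labeling; (iii) verify recoverability by exhibiting, for every $(\mathrm{W},\mathrm{V})$, an explicit linear decoder that inverts the downloaded system—this reduces to a rank condition on a block-structured matrix whose blocks are determined by $\mathrm{V}$ and by the MDS coefficients chosen in the query; and (iv) count the scalar downloads to obtain the claimed rate.

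The main obstacle, as I see it, is step~(iii) in Regime~B. Once $L<S$, the interference alignment is no longer simply a per-block affair; one must align residual equations across multiple sub-blocks and across the remainder, and this requires carefully choosing the download coefficients so that the full decoding matrix is nonsingular over $\mathbb{F}_q$ while still respecting the symmetry needed for privacy in step~(ii). I would handle this by decoupling the two concerns: first fix a symmetric template for the query (which automatically yields individual privacy), then show by a Schwartz--Zippel / Vandermonde-minor argument that a generic choice of MDS coefficients within the template produces an invertible decoding matrix over a sufficiently large finite field. Everything else is bookkeeping: once the template is fixed, the download count is a direct computation matching $\lfloor K/D\rfloor L+\min\{RL/S,R\}$, and hence the rate matches the claimed bound.
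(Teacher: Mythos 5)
Your overall strategy is the same as the paper's: a randomized partition-and-code protocol (the paper's GPC-PIA) with MDS-coded downloads per block, a case split according to whether $R/S$ or $R/L$ attains the minimum, privacy via a symmetric randomized template, and recoverability via an invertibility/Schwartz--Zippel argument. Your download accounting, $L\lfloor K/D\rfloor+\min\{RL/S,\,R\}$ symbols in total, is also correct.

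However, there is a genuine gap in your Regime~A ($L\geq S$), and it sits exactly where the main technical content of the paper's proof lives. You partition $[K]$ into $\lfloor K/D\rfloor$ blocks of size $D$ plus a remainder block of size $R$, and propose to ``position the support $\mathrm{W}$ so that it occupies exactly one of the size-$D$ blocks.'' If the support is always contained in a size-$D$ block, then every index in the remainder block has posterior probability zero of belonging to $\mathbf{W}$, violating the individual privacy condition $\Pr(i\in\mathbf{W}\mid\mathbf{Q}=\mathrm{Q})=D/K$; and your ``$R$ extra combinations that cancel residual interference'' cannot repair this, because under your placement the remainder messages never interfere with anything --- those downloads are pure dummies. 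The paper instead merges the remainder with one size-$D$ block into a single block of $D+R$ coordinates, selects that block with probability $(D+R)/K$ (versus $D/K$ for each of the other $\lfloor K/D\rfloor-1$ blocks), and must then recover $\mathrm{V}\mathrm{X}_{\mathrm{W}}$ when $\mathrm{W}$ is mapped to an arbitrary random $D$-subset $\{h_1,\dots,h_D\}$ of those $D+R$ coordinates. The mechanism that makes this possible when $L>S$ is the idea missing from your plan: the $(L+R)\times(D+R)$ query block is chosen as a generator matrix of a $[D+R,L+R]$ MDS code whose parity-check matrix restricts, on the columns indexed by $h_1,\dots,h_D$, to a parity-check matrix of the $[D,L]$ code generated by $\mathrm{V}$; this dual-code containment is precisely what guarantees that the $L$ demanded combinations lie in the row space of the downloaded ones. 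Your Regime~B sketch (size-$S$ sub-blocks, amortized alignment, generic coefficients certified by Schwartz--Zippel) is closer in spirit to the paper's construction --- which uses explicit Cauchy-matrix coefficients to align the shared sub-blocks across $R/S+1$ row-blocks --- but it inherits the same structural defect: the demand must be allowed to occupy $D/S$ of the $(D+R)/S$ sub-blocks of the enlarged block rather than one whole size-$D$ block, and the block-selection probabilities must be proportional to block sizes for the Bayes computation to return $D/K$ for every index.
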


We prove the lower bound~\eqref{eq:LB} on the capacity by constructing an IPLT protocol, termed \emph{Generalized Partition-and-Code with Partial Interference Alignment (GPC-PIA)}. 
This protocol generalizes the protocols we previously proposed in~\cite{HKRS2019,HS2020} for the IPIR and IPLC problems. 
The main ingredients of the GPC-PIA protocol are as follows: 
(i)~constructing a properly designed family of subsets of messages, where some subsets are possibly overlapping, and 
(ii)~designing a number of linear combinations for each subset, where the linear combinations pertaining to the overlapping subsets are partially aligned.

\begin{corollary}\label{cor:IPLT-Cap}
For the IPLT setting with $K$ messages, demand's support size $D$, and demand's dimension $L$, if $R\leq L$ or $R\mid D$, the capacity is given by \[\left(\left\lfloor \frac{K}{D}\right\rfloor+\min\left\{1,\frac{R}{L}\right\}\right)^{-1}.\] 
In particular, if $D\mid K$, the capacity is given by ${D}/{K}$.
\end{corollary}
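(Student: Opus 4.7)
The plan is to obtain the capacity characterization simply by combining the upper bound from Theorem~\ref{thm:IPLT-UB} and the lower bound from Theorem~\ref{thm:IPLT-LB}, and checking that, under the stated hypotheses, the two bounds coincide. Concretely, the upper bound can be rewritten as $(\lfloor K/D\rfloor+\min\{1,R/L\})^{-1}$ and the lower bound as $(\lfloor K/D\rfloor+\min\{R/S,R/L\})^{-1}$, so the task reduces to showing $\min\{1,R/L\}=\min\{R/S,R/L\}$ whenever $R\leq L$ or $R\mid D$.

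The main structural fact I would use is that $S=\gcd(D+R,R)=\gcd(D,R)$ by the Euclidean algorithm, and therefore $S$ divides $R$. Since $1\leq S\leq R$, this immediately gives $R/S\geq 1$, a bound that drives both sub-cases.

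For the case $R\leq L$, I would note that $R/L\leq 1\leq R/S$, so the minimum on the lower-bound side is attained at $R/L$, which equals the minimum on the upper-bound side (also attained at $R/L$ since $R/L\leq 1$). For the case $R\mid D$, the identity $S=\gcd(D,R)=R$ yields $R/S=1$, so the expression $\min\{R/S,R/L\}$ reduces exactly to $\min\{1,R/L\}$. In either case, the upper and lower bounds match, establishing the stated formula for the capacity.

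Finally, the specialization $D\mid K$ corresponds to $R=0$, in which case $\min\{1,R/L\}=0$ and the capacity becomes $1/\lfloor K/D\rfloor=D/K$. I do not anticipate any genuine obstacle here: the only subtle points are (i) remembering that $R<D$ by the definition of $R$ as $K\bmod D$ (so that $R\leq L$ is compatible with the range of parameters), and (ii) the elementary $\gcd$ identity used to simplify $S$; once these are in place, the proof is a short algebraic verification that slots the hypothesis into the $\min$ expressions.
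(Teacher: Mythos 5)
Your proposal is correct and takes the same route as the paper, which simply states that the corollary follows by comparing the bounds of Theorems~\ref{thm:IPLT-UB} and~\ref{thm:IPLT-LB}; you supply the explicit verification (via $S=\gcd(D,R)$, $S\mid R$, and the resulting case analysis) that the paper leaves implicit. The only cosmetic caveat is that the inequality $R/S\geq 1$ presupposes $R\geq 1$ (for $R=0$ one has $S=D$ and $R/S=0$), but since you treat $R=0$ separately at the end, the argument is complete.
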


\begin{proof}
The result follows immediately from comparing the upper and lower bounds in Theorems~\ref{thm:IPLT-UB} and~\ref{thm:IPLT-LB}.
\end{proof}

\begin{figure*}[t!]
\centering
\begin{subfigure}[t]{.5\textwidth}
  \centering
  \includegraphics[width=.75\linewidth]{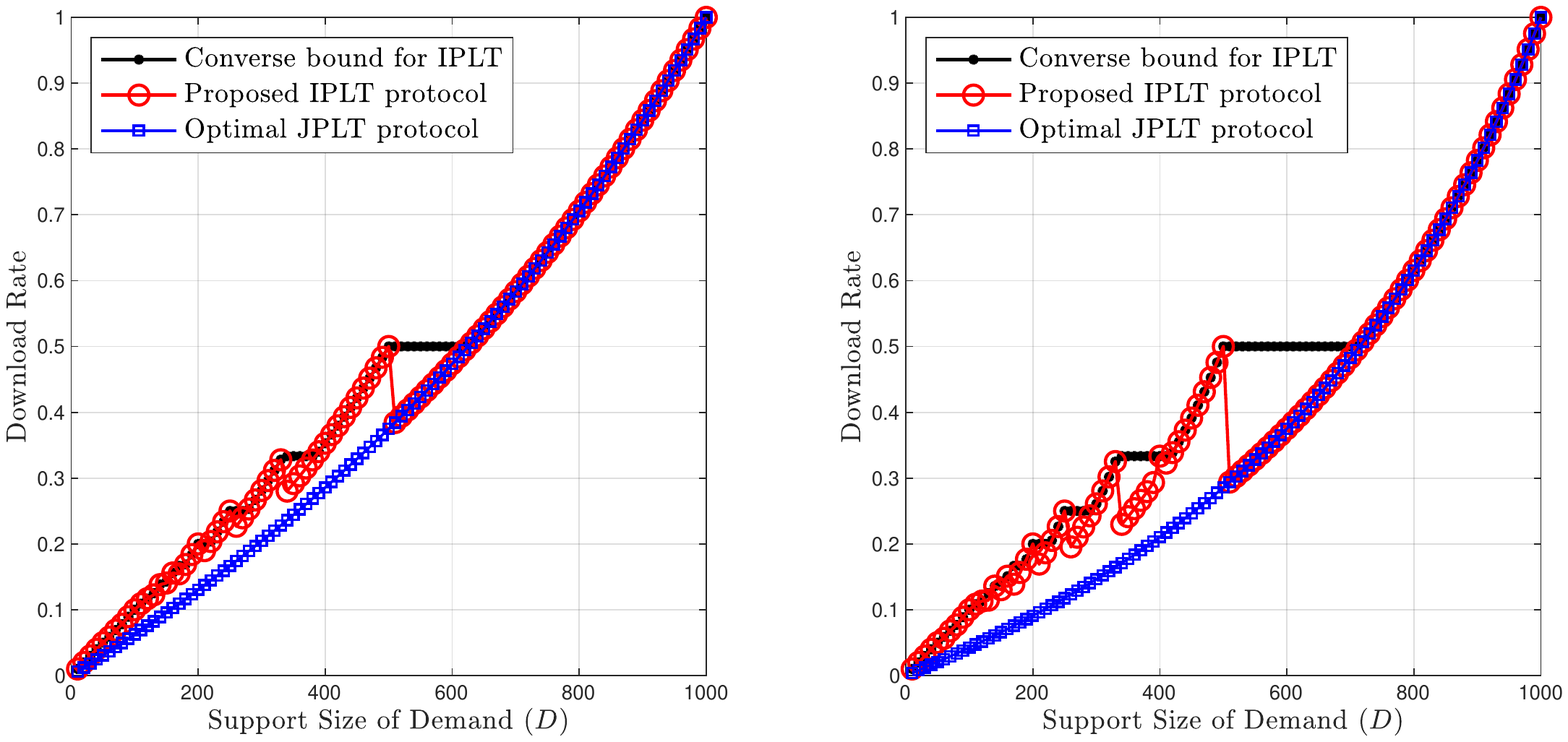}
  \caption{$K=1000$, $L/D=0.6$}
  \label{fig:IPLT1}
\end{subfigure}%
\begin{subfigure}[t]{.5\textwidth}
  \centering
  \includegraphics[width=.75\linewidth]{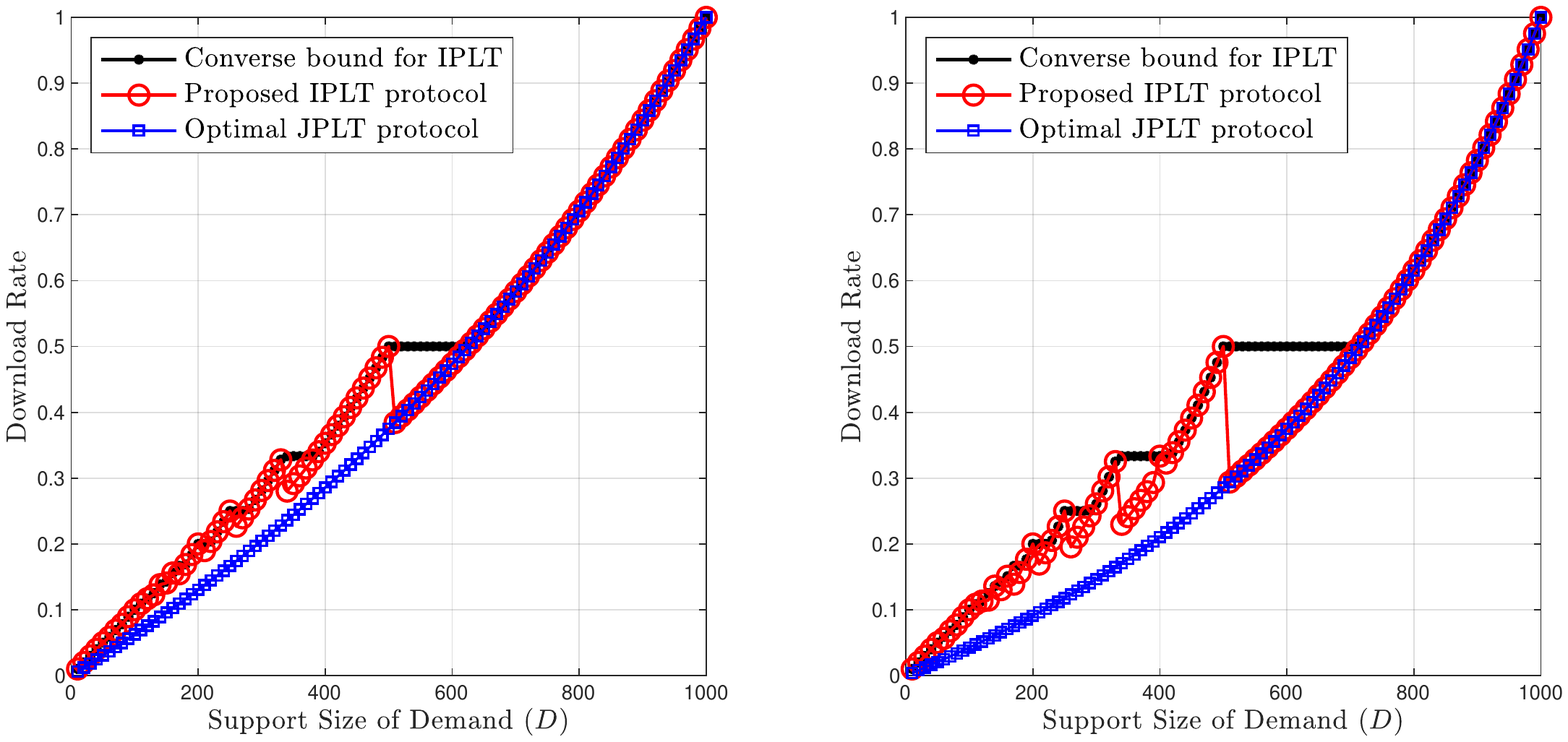}
  \caption{$K=1000$, $L/D=0.4$}
  \label{fig:IPLT2}
\end{subfigure}
\caption{The download rate of the proposed IPLT protocol and the optimal JPLT protocol of~\cite{HES2021JointJournal}.}
\label{fig:IPLT}
\end{figure*}

\begin{remark}\label{rem:IPLT1}
\emph{As shown in~\cite{HS2020}, the capacity of IPLC with side information is given by $\lceil {K}/{(D+M)}\rceil^{-1}$, where the user initially knows ${M\geq 1}$ uncoded messages or one linear combination of ${M\geq 1}$ messages as side information, and the identities of these $M$ messages are not initially known by the server. 
The capacity of this setting was, however, left open for $M=0$. 
Theorems~\ref{thm:IPLT-UB} and~\ref{thm:IPLT-LB} respectively provide an upper bound ${(\lfloor {K}/{D}\rfloor+\min\{1,R\})^{-1}}$ and a lower bound ${(\lfloor {K}/{D}\rfloor+\min\{{R}/{S},R\})^{-1}}$ on the capacity of this setting as a special case of IPLT for $L=1$. 
Interestingly, these bounds match if $R=0$ or $R\mid D$, settling the capacity of IPLC for $M=0$, when $R=0$ or $R\mid D$. 
For $L=D$, IPLT reduces to IPIR without side information. 
It is known that the optimal download rate in this case 
is 
${D}/{K}$~\cite{HKRS2019}. 
This is consistent with our results. 
Note that for $L=D$, it holds that $R=K \pmod D \leq D=L$, and by the result of Corollary~\ref{cor:IPLT-Cap}, the capacity for this case is given by $(\lfloor {K}/{D} \rfloor+\min\{1,{R}/{D}\})^{-1}=(\lfloor {K}/{D} \rfloor+{R}/{D})^{-1} = ({K}/{D})^{-1}={D}/{K}$. 
}
\end{remark}

\begin{remark}\label{rem:IPLT2}
\emph{Naturally, any JPLT protocol can also serve as an IPLT protocol. 
This comes from the fact that joint privacy is a stricter notion that implies individual privacy. 
As we showed in~\cite{HES2021JointJournal}, an optimal JPLT protocol achieves the rate $L/(K-D+L)$. 
In order to compare the performance of the optimal JPLT protocol of~\cite{HES2021JointJournal} and the proposed IPLT protocol, we depict the download rate of these protocols in Fig.~\ref{fig:IPLT}, for different values of $D\in \{10,20,\dots,1000\}$, where $K=1000$, and $L/D=0.6$ (left plot) or $L/D=0.4$ (right plot). 
One can observe that, when the ratio $L/D$ is fixed, for sufficiently small values of $D$, the download rate of our IPLT protocol is higher than that of the JPLT protocol of~\cite{HES2021JointJournal}; 
whereas, for values of $D$ larger than a threshold, both protocols achieve the same rate. 
This implies that for sufficiently large $D$, achieving individual privacy is as costly as achieving joint privacy. 
In addition, for some values of $D$, the rate achieved by our IPLT protocol matches the converse bound. 
This, in turn, confirms the optimality of our IPLT protocol for such values of $D$. 
By comparing the left and right plots in Fig.~\ref{fig:IPLT}, it can also be seen that for a sufficiently small value of $D$, 
the smaller is the ratio $L/D$, the better is the performance of our IPLT protocol as compared to the JPLT protocol of~\cite{HES2021JointJournal}.
For instance, for $D=250$, the rate of our IPLT protocol is about $33\%$ and $53\%$ more than that of the JPLT protocol of~\cite{HES2021JointJournal} for $L/D=0.6$ and $L/D=0.4$, respectively.
}
\end{remark}

\section{Linear IPLT Protocols and Linear Codes}\label{sec:LINEAR}

While the individual privacy and recoverability conditions must hold for any linear or non-linear IPLT protocol, they establish an interesting connection between linear IPLT protocols and linear codes. 
Below, we discuss this connection for both deterministic and randomized protocols. 
  


Consider a deterministic linear IPLT protocol. 
Consider an arbitrary ordering of all elements in $\mathbbmss{W}$ and $\mathbbmss{V}$, denoted by $\{\mathrm{W}_k\}_{k\in [w]}$ and $\{\mathrm{V}_l\}_{l\in [v]}$, respectively, where $w\triangleq |\mathbbmss{W}|$ and $v\triangleq |\mathbbmss{V}|$.
For any $k\in [w]$ and $l\in [v]$, we denote by $\mathscr{C}_{k,l}$ the corresponding linear code for the instance $(\mathrm{W}_k,\mathrm{V}_l)$. 
That is, $\mathscr{C}_{k,l}$ is the code corresponding to the coefficient matrix of the linear combinations that constitute the answer $\mathrm{A}^{[\mathrm{W}_k,\mathrm{V}_l]}$ to the query $\mathrm{Q}^{[\mathrm{W}_k,\mathrm{V}_l]}$. 
Note that $\mathscr{C}_{k,l}$'s are not necessarily distinct, and $\{\mathscr{C}_{k,l}\}_{k,l}$ is a multiset in general. 
Let $m$ be the number of distinct elements in the multiset $\{\mathscr{C}_{k,l}\}_{k,l}$, and let $\mathscr{C}_1,\dots,\mathscr{C}_m$ and $r_1,\dots,r_m$ be the 
distinct elements and their multiplicities in the multiset $\{\mathscr{C}_{k,l}\}_{k,l}$, respetively.  


For any $k,l$, a linear code $\mathscr{C}$ of length $K$ is said to be \emph{$(k,l)$-feasible} 
if $\mathscr{C}$ contains a collection $\mathrm{C}$ of $L$ codewords whose support is a subset of $\mathrm{W}_{k}$, and the 
code generated by $\mathrm{C}$, when punctured at the coordinates indexed by $\mathrm{W}_{k}$, is identical to the code generated by $\mathrm{V}_{l}$.
\footnote{To puncture a linear code at a coordinate, the column corresponding to that coordinate is deleted from the generator matrix of the code.} 

Note that the $(k,l)$-feasibility is simply a necessary and sufficient condition for recoverability, for the instance $(\mathrm{W}_k,\mathrm{V}_l)$. 
That is, for recoverability, it is necessary and sufficient that for any $k,l$, $\mathscr{C}_{k,l}$ is $(k,l)$-feasible.

Having defined the notion of $(k,l)$-feasibility, a necessary condition for individual privacy is that 
for any ${i\in [K]}$ and ${j\in [m]}$, there exists a pair ${(k,l)\in [w]\times [v]}$ such that 
\begin{itemize}
    \item[1)] $\mathrm{W}_k$ contains the coordinate $i$;
    \item[2)] $\mathscr{C}_{k,l}$ is $(k,l)$-feasible;
    \item[3)] $\mathscr{C}_{k,l}$ and $\mathscr{C}_j$ are identical.
\end{itemize}
To verify the necessity of this condition for individual privacy, 
suppose that for given $i,j$ there is no such pair $(k,l)$. 
Then, if the answer corresponds to the code $\mathscr{C}_j$, the message index $i$ has zero probability to belong to the demand's support.
This obviously violates the individual privacy condition.
However, this necessary condition is not sufficient for individual privacy. 
For any $i,j$, let $n_{i,j}$ be the number of pairs $(k,l)$ such that the conditions 1-3 are satisfied.
Note that $n_{i,j}/r_j$ is equal to the conditional probability that the message index $i$ belongs to the demand's support, given that $\mathscr{C}_j$ is the code corresponding to the answer. 
The above necessary condition for individual privacy simply states that ${n_{i,j}>0}$ for all $i,j$. 
However, there may exist two coordinates $i_1,i_2$ such that $n_{i_1,j}\neq n_{i_2,j}$ for some $j$. 
This asymmetry, in turn, may cause a violation of the individual privacy condition.
A necessary and sufficient condition for individual privacy is that for any $j\in [m]$, ${n_{i,j}=n_j}$ for all $i\in [K]$, for some integer ${n_j>0}$. 

For any $k,l$, let $d_{k,l}$ be the dimension of the code $\mathscr{C}_{k,l}$, and $d_{\text{ave}}$ be the average of $d_{k,l}$'s over all $k,l$. 
The rate of a deterministic linear IPLT protocol is equal to $1/d_{\text{ave}}$. 
Maximizing the rate is then equivalent to minimizing $d_{\text{ave}}$, subject to the above necessary and sufficient conditions for individual privacy and recoverability.




Any randomized linear IPLT protocol can be represented, for any instance $(\mathrm{W}_k,\mathrm{V}_l)$, by a (finite) ensemble of distinct linear codes of length $K$, say, $\mathscr{C}^{1}_{k,l},\dots,\mathscr{C}^{n}_{k,l}$ for some integer $n$ ($=n(k,l)$), and their respective (nonzero) probabilities $p^{1}_{k,l},\dots,p^{n}_{k,l}$, where $\mathscr{C}^{h}_{k,l}$ for $h\in [n]$ is the corresponding code for the instance $(\mathrm{W}_k,\mathrm{V}_l)$ with probability $p^{h}_{k,l}$.
Note that $\sum_{h=1}^{n}p^{h}_{k,l} = 1$. 
Let $m$ be the number of distinct elements in the multiset $\{\mathscr{C}^{h}_{k,l}\}_{k,l,h}$, and let $\mathscr{C}_1,\dots,\mathscr{C}_m$ be the distinct elements in the multiset $\{\mathscr{C}^{h}_{k,l}\}_{k,l,h}$. 



For any $i\in [K]$ and $j\in [m]$, let $q_{i,j}$ be the sum of probabilities $p^{h}_{k,l}$ over all $k,l,h$ such that 
$\mathrm{W}_k$ contains the coordinate $i$, $\mathscr{C}^{h}_{k,l}$ is $(k,l)$-feasible, and $\mathscr{C}^{h}_{k,l}$ and $\mathscr{C}_{j}$ are identical. 
For any $j\in [m]$, let $r_j$ be the sum of probabilities $p^{h}_{k,l}$ over all $k,l,h$ such that $\mathscr{C}^{h}_{k,l}$ and $\mathscr{C}_{j}$ are identical. 
Note that $q_{i,j}/r_j$ is the conditional probability that the message index $i$ belongs to the demand's support, given that $\mathscr{C}_j$ is the code corresponding to the answer. 
This immediately implies that a necessary condition for individual privacy is that $q_{i,j}> 0$ for all $i,j$. 
This condition is, however, not sufficient. 
A necessary and sufficient condition for individual privacy is that for any $j\in [m]$, $q_{i,j}=q_j$ for all $i\in [K]$, for some $q_j>0$.
Also, a necessary and sufficient condition for recoverability is that for any $k,l,h$, $\mathscr{C}^{h}_{k,l}$ is $(k,l)$-feasible. 

For any $k,l$, let $d_{k,l}$ be the expected value of the dimension of a randomly chosen code from the ensemble $\{\mathscr{C}^{1}_{k,l},\dots,\mathscr{C}^{n}_{k,l}\}$ for the instance $(\mathrm{W}_k,\mathrm{V}_l)$, according to the probability distribution $\{p^{1}_{k,l},\dots,p^{n}_{k,l}\}$. 
Let $d_{\text{ave}}$ be the average of $d_{k,l}$'s over all $k,l$. 
Maximizing the rate of a randomized linear IPLT protocol, $1/d_{\text{ave}}$, is then equivalent to minimizing $d_{\text{ave}}$, subject to the necessary and sufficient conditions mentioned above for the individual privacy and recoverability conditions. 

\section{Proof of Theorem~\ref{thm:IPLT-UB}}\label{sec:IPLT-Conv}
In this section, we prove the result of Theorem~\ref{thm:IPLT-UB} by upper bounding the rate of IPLT protocols for any field size $q$. 

The proof relies on the following result which is a direct consequence of the individual privacy and recoverability conditions. 


\begin{lemma}\label{lem:NCIPLT}
Given any IPLT protocol, for any $i\in [K]$, there must exist $\tilde{\mathrm{W}}\in\mathbbm{W}$ with $i\in \tilde{\mathrm{W}}$, and $\tilde{\mathrm{V}}\in\mathbbm{V}$, such that \[H(\mathbf{Z}^{[\tilde{\mathrm{W}},\tilde{\mathrm{V}}]}| \mathbf{A}, \mathbf{Q}) = 0.\] 	
\end{lemma}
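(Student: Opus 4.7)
My plan is to derive the conclusion as a short consequence of the privacy and recoverability conditions, together with the fact that the answer depends on $(\mathbf{W},\mathbf{V})$ only through the query. The quantifier ``$\tilde{\mathrm{W}},\tilde{\mathrm{V}}$ exist'' will be made precise by choosing the pair as a (random) function of the realization $(\mathbf{A},\mathbf{Q})$, so that averaging the resulting pointwise equality over realizations yields $H(\mathbf{Z}^{[\tilde{\mathrm{W}},\tilde{\mathrm{V}}]}\mid \mathbf{A},\mathbf{Q})=0$.

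First I would fix $i\in[K]$ and an arbitrary realization $(\mathrm{a},\mathrm{q})$ of $(\mathbf{A},\mathbf{Q})$ with positive probability. The individual privacy condition $\Pr(i\in\mathbf{W}\mid\mathbf{Q}=\mathrm{q})=D/K>0$ forces the existence of some pair $(\tilde{\mathrm{W}},\tilde{\mathrm{V}})\in\mathbbm{W}\times\mathbbm{V}$ with $i\in\tilde{\mathrm{W}}$ and $\Pr(\mathbf{W}=\tilde{\mathrm{W}},\mathbf{V}=\tilde{\mathrm{V}},\mathbf{Q}=\mathrm{q})>0$. Call such a pair ``consistent'' with $\mathrm{q}$ and select one for each realization.

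Next I would apply the recoverability condition to this consistent pair, which gives $H(\mathbf{Z}^{[\tilde{\mathrm{W}},\tilde{\mathrm{V}}]}\mid \mathbf{A},\mathbf{Q},\mathbf{W}=\tilde{\mathrm{W}},\mathbf{V}=\tilde{\mathrm{V}})=0$, and then remove the conditioning on $(\mathbf{W},\mathbf{V})$. The key point is that given $\mathbf{Q}=\mathrm{q}$, the answer is a deterministic function $A_{\mathrm{q}}(\mathbf{X})$ of the messages, the same function regardless of which consistent $(\mathbf{W},\mathbf{V})$ the user actually holds. Combined with the assumption $\mathbf{X}\perp(\mathbf{W},\mathbf{V})$, this implies that the conditional distribution of $(\mathbf{X},\mathbf{A})$ given $\mathbf{Q}=\mathrm{q}$ coincides with its conditional distribution given $(\mathbf{Q},\mathbf{W},\mathbf{V})=(\mathrm{q},\tilde{\mathrm{W}},\tilde{\mathrm{V}})$. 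Since $\tilde{\mathrm{V}}\mathbf{X}_{\tilde{\mathrm{W}}}$ is a deterministic function of $\mathbf{X}$, it follows that $H(\tilde{\mathrm{V}}\mathbf{X}_{\tilde{\mathrm{W}}}\mid \mathbf{A},\mathbf{Q}=\mathrm{q})=0$. Averaging over realizations with the consistent pair chosen as above yields the claim.

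The main obstacle is articulating the ``removal of conditioning'' step cleanly for \emph{randomized} protocols, where $\mathbf{Q}$ is produced from $(\mathbf{W},\mathbf{V})$ together with auxiliary randomness $\mathbf{U}$. I would handle this by enlarging the probability space so that $\mathbf{Q}$ is measurable with respect to $\sigma(\mathbf{W},\mathbf{V},\mathbf{U})$, and observing that $\mathbf{X}\perp(\mathbf{W},\mathbf{V},\mathbf{U})$ implies $\mathbf{X}\perp(\mathbf{W},\mathbf{V})\mid\mathbf{Q}$; combined with $\mathbf{A}=A_{\mathrm{q}}(\mathbf{X})$ on $\{\mathbf{Q}=\mathrm{q}\}$, the desired equality of conditional distributions follows. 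A secondary subtlety is verifying that the ``consistent pair'' map $(\mathrm{a},\mathrm{q})\mapsto(\tilde{\mathrm{W}},\tilde{\mathrm{V}})$ can be chosen measurably, which is immediate since $\mathbbm{W}\times\mathbbm{V}$ is finite.
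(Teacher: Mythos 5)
Your argument is correct and, up to direction, is exactly the paper's own: the paper omits the proof entirely, describing it only as ``straightforward by the way of contradiction'' (i.e., if no recoverable pair $(\tilde{\mathrm{W}},\tilde{\mathrm{V}})$ with $i\in\tilde{\mathrm{W}}$ existed, then $\Pr(i\in\mathbf{W}\mid\mathbf{Q}=\mathrm{Q})$ would be $0$ rather than $D/K$), and your direct derivation is the contrapositive of that, with the added value that your decoupling step --- using $\mathbf{X}\perp(\mathbf{W},\mathbf{V},\mathbf{U})$ and the fact that $\mathbf{A}$ depends on $(\mathbf{W},\mathbf{V})$ only through $\mathbf{Q}$ to pass from $H(\mathbf{Z}^{[\tilde{\mathrm{W}},\tilde{\mathrm{V}}]}\mid\mathbf{A},\mathbf{Q},\mathbf{W}=\tilde{\mathrm{W}},\mathbf{V}=\tilde{\mathrm{V}})=0$ to $H(\mathbf{Z}^{[\tilde{\mathrm{W}},\tilde{\mathrm{V}}]}\mid\mathbf{A},\mathbf{Q}=\mathrm{q})=0$ --- supplies precisely the detail the paper leaves unstated. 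The one point worth keeping in view is the quantifier issue you flag yourself: your construction produces a pair that may vary with the query realization $\mathrm{q}$, whereas the lemma is later invoked (in the proof of Theorem~1, e.g., via $H(\mathbf{Z}_1\mid\mathbf{Q})=H(\mathbf{Z}_1)$ and the fixed matrices $\mathrm{U}_i$) as if $(\tilde{\mathrm{W}},\tilde{\mathrm{V}})$ were a single fixed pair; this is reconcilable by running the converse conditioned on each $\mathbf{Q}=\mathrm{q}$ and averaging, but it is a genuine mismatch with the literal unconditional statement $H(\mathbf{Z}^{[\tilde{\mathrm{W}},\tilde{\mathrm{V}}]}\mid\mathbf{A},\mathbf{Q})=0$ for fixed $(\tilde{\mathrm{W}},\tilde{\mathrm{V}})$, which your argument (correctly, in my view) does not establish.
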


\begin{proof}
The proof is straightforward by the way of contradiction, and hence, omitted for brevity. 
\end{proof}


For (deterministic and randomized) linear IPLT protocols, the result of Lemma~\ref{lem:NCIPLT} is equivalent to the necessary (but not sufficient) conditions stated in Section~\ref{sec:LINEAR} for individual privacy.  
Notwithstanding that these necessary conditions are weaker than the necessary and sufficient conditions for individual privacy in Section~\ref{sec:LINEAR} for linear IPLT protocols, the former are less combinatorial and more information-theoretic. 
In addition, the necessary and sufficient conditions in Section~\ref{sec:LINEAR} are only applicable to linear protocols; whereas 
Lemma~\ref{lem:NCIPLT} applies to both linear and non-linear protocols. 

\begin{lemma}\label{lem:IPLT-Conv}
The rate of any IPLT protocol for $K$ messages, demand's support size $D$ and dimension $L$, is upper bounded by $(\lfloor {K}/{D}\rfloor+\min\{1,{R}/{L}\})^{-1}$. 
\end{lemma}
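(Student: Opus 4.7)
The plan is to lower bound $H(\mathbf{A})$, since the rate equals $LB/H(\mathbf{A})$. I first invoke Lemma~\ref{lem:NCIPLT} to fix, for every $i\in[K]$, a pair $(\tilde{\mathrm{W}}_i,\tilde{\mathrm{V}}_i)\in\mathbbm{W}\times\mathbbm{V}$ with $i\in\tilde{\mathrm{W}}_i$ and $H(\mathbf{Z}_i\mid\mathbf{A},\mathbf{Q})=0$, where $\mathbf{Z}_i\triangleq\tilde{\mathrm{V}}_i\mathbf{X}_{\tilde{\mathrm{W}}_i}$. Because $\mathbf{Q}$ is a function of $(\mathbf{W},\mathbf{V})$ together with private randomness, and hence independent of $\mathbf{X}$, while each $\mathbf{Z}_i$ is a function of $\mathbf{X}$, the chain
\[
H(\mathbf{A})\;\geq\;H(\mathbf{A}\mid\mathbf{Q})\;\geq\;H(\mathbf{Z}_1,\ldots,\mathbf{Z}_K\mid\mathbf{Q})\;=\;H(\mathbf{Z}_1,\ldots,\mathbf{Z}_K)
\]
holds, reducing the problem to lower bounding the joint entropy of the $K$ selected demands.

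Next I expand this joint entropy using the chain rule in the natural order. Let $\mathrm{U}_0=\varnothing$, $\mathrm{U}_i=\bigcup_{j\leq i}\tilde{\mathrm{W}}_j$, and $u_i\triangleq|\tilde{\mathrm{W}}_i\setminus\mathrm{U}_{i-1}|$. Since $\mathbf{Z}_1,\ldots,\mathbf{Z}_{i-1}$ are functions of $\mathbf{X}_{\mathrm{U}_{i-1}}$, conditioning on the richer $\mathbf{X}_{\mathrm{U}_{i-1}}$ can only decrease entropy, so
\[
H(\mathbf{Z}_i\mid\mathbf{Z}_1,\ldots,\mathbf{Z}_{i-1})\;\geq\;H(\mathbf{Z}_i\mid\mathbf{X}_{\mathrm{U}_{i-1}}).
\]
Splitting $\tilde{\mathrm{W}}_i$ into its $(D-u_i)$ ``old'' coordinates inside $\mathrm{U}_{i-1}$ and its $u_i$ ``new'' coordinates outside, conditioning on $\mathbf{X}_{\mathrm{U}_{i-1}}$ cancels the ``old'' contribution and leaves $\tilde{\mathrm{V}}_i^{\mathrm{new}}\mathbf{X}_{\tilde{\mathrm{W}}_i\setminus\mathrm{U}_{i-1}}$, where $\tilde{\mathrm{V}}_i^{\mathrm{new}}$ is an $L\times u_i$ column-restriction of the MDS matrix $\tilde{\mathrm{V}}_i$ and therefore has rank $\min(L,u_i)$. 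Combined with the independence and uniformity of the ``new'' messages over $\mathbbmss{F}_q^{N}$, this yields $H(\mathbf{Z}_i\mid\mathbf{X}_{\mathrm{U}_{i-1}})=\min(L,u_i)\,B$, and summing gives $H(\mathbf{A})\geq B\sum_{i=1}^{K}\min(L,u_i)$.

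The final step is to minimize $\sum_{i=1}^K\min(L,u_i)$ over the feasible set. Since $i\in\tilde{\mathrm{W}}_i$ for every $i$, one has $\mathrm{U}_K=[K]$, forcing $\sum_i u_i=K$; in addition $0\leq u_i\leq D$. Casting this as an integer program, the objective is concave in $u$ (being a sum of minima of linear functions), so it is minimized at a vertex of the polytope, i.e., a tuple with $K-1$ coordinates in $\{0,D\}$ and at most one free coordinate in $[0,D]$. A short case analysis identifies the minimum as $\lfloor K/D\rfloor L+\min(L,R)$, realized by placing $D$ on $\lfloor K/D\rfloor$ coordinates, $R$ on one coordinate (when $R>0$), and $0$ on the rest. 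Therefore $H(\mathbf{A})\geq(\lfloor K/D\rfloor L+\min(L,R))\,B$, and dividing $LB$ by this bound gives the rate upper bound $(\lfloor K/D\rfloor+\min\{1,R/L\})^{-1}$.

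The main technical hurdle is the MDS-based per-step identity $H(\mathbf{Z}_i\mid\mathbf{X}_{\mathrm{U}_{i-1}})=\min(L,u_i)\,B$: without the MDS assumption, a column-restriction of $\tilde{\mathrm{V}}_i$ could be rank-deficient and the per-step increment would potentially be strictly smaller than $\min(L,u_i)$, losing the bound. Once that step is in place, the remaining vertex optimization is routine, and because the argument is uniform over any valid choice of pairs delivered by Lemma~\ref{lem:NCIPLT} and any ordering of $[K]$, the bound is protocol-independent as required.
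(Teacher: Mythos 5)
Your proof is correct and follows essentially the same route as the paper's: invoke Lemma~\ref{lem:NCIPLT} to obtain decodable demands whose supports cover $[K]$, expand $H(\mathbf{A})$ via an entropy chain rule, use the MDS property to lower bound each increment by $\min\{L,u_i\}B$, and then optimize $\sum_i\min\{L,u_i\}$ under the covering constraints. The only differences are cosmetic and arguably cleaner: you index the demands by all $K$ message indices (allowing $u_i=0$) rather than greedily extracting a covering subfamily, you obtain the per-step bound by conditioning on $\mathbf{X}_{\mathrm{U}_{i-1}}$ instead of the paper's explicit row-independence argument for the stacked coefficient matrices, and you solve the final optimization by concavity and a vertex argument on the LP relaxation rather than the paper's ILP via Gomory's cutting-plane method --- all of which yield the same value $L\lfloor K/D\rfloor+\min\{L,R\}$.
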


\begin{proof}
Consider an arbitrary IPLT protocol that generates the query-answer pair $(\mathrm{Q}^{[\mathrm{W},\mathrm{V}]},\mathrm{A}^{[\mathrm{W},\mathrm{V}]})$ for any given $\mathrm{W}$ and $\mathrm{V}$. 
For the ease of notation, we denote by $\mathbf{Q}$ and $\mathbf{A}$ the random variables $\mathbf{Q}^{[\mathbf{W},\mathbf{V}]}$ and $\mathbf{A}^{[\mathbf{W},\mathbf{V}]}$, respectively. 
To prove the upper bound on the rate, we need to show that ${H(\mathbf{A})\geq (L\lfloor {K}/{D}\rfloor+\min\{L,R\})B}$. 
Recall that 
${B=N\log_2 q}$ is the entropy of a uniformly distributed message over $\mathbbmss{F}_q^{N}$. 

Consider an arbitrary message index $k_1\in [K]$. 
By the result of Lemma~\ref{lem:NCIPLT}, there exist $\mathrm{W}_1\in \mathbbm{W}$ with $k_1\in \mathrm{W}_1$, and $\mathrm{V}_1\in\mathbbm{V}$ such that $H(\mathbf{Z}_1|\mathbf{A},\mathbf{Q})=0$, where $\mathbf{Z}_1\triangleq \mathbf{Z}^{[\mathrm{W}_1,\mathrm{V}_1]}$. 
By the same arguments as in the proof of \cite[Lemma~2]{HES2021JointJournal}, we have
\begin{align}
H(\mathbf{A})&\geq H(\mathbf{A}|\mathbf{Q})+H(\mathbf{Z}_1|\mathbf{A},\mathbf{Q}) \nonumber \\
& = {H(\mathbf{Z}_1|\mathbf{Q})+H(\mathbf{A}|\mathbf{Q},\mathbf{Z}_1}) \nonumber \\ 
& = {H(\mathbf{Z}_1)+H(\mathbf{A}|\mathbf{Q},\mathbf{Z}_1}). \label{eq:5}   
\end{align}
To further lower bound $H(\mathbf{A}|\mathbf{Q},\mathbf{Z}_1)$, we proceed as follows. Take an arbitrary message index $k_2\not\in \mathrm{W}_1$.
Again, by Lemma~\ref{lem:NCIPLT}, there exist $\mathrm{W}_2\in \mathbbm{W}$ with $k_2\in \mathrm{W}_2$, and $\mathrm{V}_2\in\mathbbm{V}$ such that $H(\mathbf{Z}_2|\mathbf{A},\mathbf{Q})=0$, where $\mathbf{Z}_2\triangleq \mathbf{Z}^{[\mathrm{W}_2,\mathrm{V}_2]}$. 
Using a similar technique as in~\eqref{eq:5}, it follows that $H(\mathbf{A}|\mathbf{Q},\mathbf{Z}_1)\geq  H(\mathbf{Z}_2|\mathbf{Q},\mathbf{Z}_1)+H(\mathbf{A}|\mathbf{Q},\mathbf{Z}_1,\mathbf{Z}_2)$, and consequently,  
\begin{equation}\label{eq:6}
H(\mathbf{A}|\mathbf{Q},\mathbf{Z}_1)\geq H(\mathbf{Z}_2|\mathbf{Z}_1)+H(\mathbf{A}|\mathbf{Q},\mathbf{Z}_2,\mathbf{Z}_1).    
\end{equation}
Combining~\eqref{eq:5} and~\eqref{eq:6}, we get 
\begin{equation}\label{eq:7}
H(\mathbf{A})\geq H(\mathbf{Z}_1)+H(\mathbf{Z}_2|\mathbf{Z}_1)+H(\mathbf{A}|\mathbf{Q},\mathbf{Z}_2,\mathbf{Z}_1).    
\end{equation} 



We repeat this lower-bounding process multiple rounds until there is no message index left to take. 
Let $n$ be the total number of rounds, and let $k_1,\dots,k_n$ be the $n$ message indices chosen over the rounds. 
For every $i\in [n]$, let $\mathrm{W}_i\in \mathbbm{W}$ with $k_i\in \mathrm{W}_i$ and $k_i\not\in \cup_{j=1}^{i-1}\mathrm{W}_j$, and $\mathrm{V}_i\in \mathbbm{V}$, be such that ${H(\mathbf{Z}_{i}|\mathbf{A},\mathbf{Q})=0}$, where $\mathbf{Z}_{i}\triangleq \mathbf{Z}^{[\mathrm{W}_i,\mathrm{V}_i]}$. 
(For any ${i\in [n]}$, the existence of $\mathrm{W}_i$ and $\mathrm{V}_i$ follows from the result of Lemma~\ref{lem:NCIPLT}.)
Note that $\cup_{i=1}^{n}\mathrm{W}_i = [K]$. 
This is because if $\cup_{i=1}^{n}\mathrm{W}_i \neq [K]$, the lower-bounding process could be continued for at least one more round (beyond $n$ rounds) by taking an arbitrary message index $k_{n+1}\in [K]\setminus \cup_{i=1}^{n}\mathrm{W}_i$, which contradicts with $n$ being the total number of rounds.  
Using the same technique as in~\eqref{eq:5} and~\eqref{eq:7}, we can show that
\begin{align}\label{eq:8}
H(\mathbf{A}) & \geq \sum_{i=1}^{n} H(\mathbf{Z}_i|\mathbf{Z}_{i-1},\dots,\mathbf{Z}_{1}) \nonumber \\ 
& \quad \quad 
+H(\mathbf{A}|\mathbf{Q},\mathbf{Z}_n,\dots,\mathbf{Z}_1)\nonumber\\
&\geq \sum_{i=1}^{n}  H(\mathbf{Z}_i|\mathbf{Z}_{i-1},\dots,\mathbf{Z}_{1}).
\end{align} 

Next, 
we show that 
\begin{equation}\label{eq:9}
H(\mathbf{Z}_i|\mathbf{Z}_{i-1},\dots,\mathbf{Z}_{1})\geq \min\{N_i,L\}B,    
\end{equation}
where ${N_i\triangleq |\mathrm{W}_i\setminus \cup_{j=1}^{i-1}\mathrm{W}_j|}$ is the number of message indices that belong to $\mathrm{W}_i$, but not $\cup_{j=1}^{i-1}\mathrm{W}_j$. 
(Note that ${N_1= |\mathrm{W}_1|=D}$.) 
Let $\mathbf{Z}_{i,1},\dots,\mathbf{Z}_{i,L}$ be the $L$ (row-) vectors pertaining to $\mathbf{Z}_i$, where $\mathbf{Z}_{i,l}\triangleq \mathrm{v}_{i,l} \mathbf{X}_{\mathrm{W}_i}$, and $\mathrm{v}_{i,l}$ is the $l$th row of the matrix $\mathrm{V}_i$ for each $l\in [L]$. 
The vectors $\mathbf{Z}_{i,1},\dots,\mathbf{Z}_{i,L}$ are linear combinations of the messages $\mathbf{X}_1,\dots,\mathbf{X}_K$. 
We need to show that there exist ${M_i\triangleq \min\{N_i,L\}}$ vectors pertaining to $\mathbf{Z}_i$ that are independent of all vectors pertaining to $\mathbf{Z}_1,\dots,\mathbf{Z}_{i-1}$.
Let $\mathrm{u}_{i,l}$ be a row-vector of length $K$ such that the vector $\mathrm{u}_{i,l}$ restricted to its components indexed by $\mathrm{W}_i$ is equal to the vector $\mathrm{v}_{i,l}$, and the rest of the components of the vector $\mathrm{u}_{i,l}$ are all zero, and 
let  
${\mathrm{U}_i \triangleq [\mathrm{u}^{\transpose}_{i,1},\dots,\mathrm{u}^{\transpose}_{i,L}]^{\transpose}}$. 
Using this notation, we need to show that the $L\times N$ matrix $\mathrm{U}_i$ contains $M_i$ rows that are linearly independent of the rows of the $L\times N$ matrices $\mathrm{U}_1,\dots,\mathrm{U}_{i-1}$. 
Note that the rows of $\mathrm{U}_i$ are linearly independent. 
This is because $\mathrm{U}_i$ contains $\mathrm{V}_i$ as a submatrix, and $\mathrm{V}_i$ has full rank (by assumption, $\mathrm{V}_i$ is MDS). 
Let $\mathrm{S}_i$ be an $L\times N_i$ submatrix of $\mathrm{U}_i$ formed by the columns indexed by $\mathrm{W}_i\setminus \cup_{j=1}^{i-1}\mathrm{W}_j$. 
Note that $\mathrm{S}_i$ is a submatrix of $\mathrm{V}_i$, and every $L\times L$ submatrix of $\mathrm{V}_i$ is invertible. 
Below, we consider two different cases: 
(i) $N_i\leq L$, and (ii) $N_i>L$.

In the case (i), the $N_i$ columns of $\mathrm{S}_i$ are linearly independent. 
Otherwise, any $L\times L$ submatrix of $\mathrm{V}_i$ that contains $\mathrm{S}_i$ cannot be invertible, and hence a contradiction. 
In the case (ii), any $L$ columns of $\mathrm{S}_i$ are linearly independent. 
Otherwise, $\mathrm{S}_i$ (and $\mathrm{V}_i$) contains an $L\times L$ submatrix that is not invertible, which is a contradiction. 
By these arguments, $\mathrm{rank}(\mathrm{S}_i)=M_i=\min\{L,N_i\}$, and hence, $\mathrm{S}_i$ contains $M_i$ linearly independent rows. 
Without loss of generality, assume that the first $M_i$ rows of $\mathrm{S}_i$ are linearly independent. 
Also, observe that the submatrix of $[\mathrm{U}^{\transpose}_{1},\dots,\mathrm{U}^{\transpose}_{i-1}]^{\transpose}$ restricted to its columns indexed by ${\mathrm{W}_i\setminus \cup_{j=1}^{i-1}\mathrm{W}_j}$ 
is an all-zero matrix. 
Thus, 
the first $M_i$ rows of $\mathrm{U}_i$ are linearly independent of the rows of $[\mathrm{U}^{\transpose}_{1},\dots,\mathrm{U}^{\transpose}_{i-1}]^{\transpose}$. 
This proves that there exist ${M_i}$ vectors pertaining to $\mathbf{Z}_i$ that are independent of all vectors pertaining to $\mathbf{Z}_1,\dots,\mathbf{Z}_{i-1}$. 
This completes the proof of~\eqref{eq:9}. 

Combining~\eqref{eq:8} and~\eqref{eq:9}, we have 
\begin{equation}\label{eq:10}
H(\mathbf{A})\geq \sum_{i=1}^{n} \min\{L,N_i\}B.    
\end{equation} 
Recall that $N_i = |\mathrm{W}_i\setminus \cup_{j=1}^{i-1} \mathrm{W}_j|$. 
Note that ${1\leq N_i\leq D}$ since $\mathrm{W}_i\setminus \cup_{j=1}^{i-1} \mathrm{W}_j$ is a subset of $\mathrm{W}_i$, and the message index $k_i$ belongs to ${\mathrm{W}_i\setminus \cup_{j=1}^{i-1} \mathrm{W}_j}$. 
Moreover, ${\sum_{i=1}^{n} N_i = K}$. 
This is because $\mathrm{W}_1$, ${\mathrm{W}_2\setminus \mathrm{W}_1}$, $\dots$, ${\mathrm{W}_n\setminus \cup_{j=1}^{n-1} \mathrm{W}_j}$ form a partition of $[K]$, and ${|\mathrm{W}_1|=N_1=D}$, ${|\mathrm{W}_2\setminus \mathrm{W}_1|=N_2}$, $\dots$, ${|\mathrm{W}_n\setminus \cup_{j=1}^{n-1} \mathrm{W}_j|=N_n}$. 

To obtain a converse bound, we need to minimize the right-hand side of~\eqref{eq:10}, namely, $\sum_{i=1}^{n} \min\{L,N_i\}$, subject to the constraints (i) ${N_1=D}$, and $1\leq N_i\leq D$ for any ${1<i\leq n}$, and (ii) ${\sum_{i=1}^{n} N_i = K}$. 
To solve this optimization problem, we first reformulate it using a change of variables as follows. 
For every ${j\in [D]}$, let ${T_j\triangleq \sum_{i=1}^{n} \mathbbm{1}_{\{N_i=j\}}}$ be the number of rounds $i$ such that ${N_i=j}$. 
Using this notation, the objective function $\sum_{i=1}^{n} \min\{L,N_i\}$ can be rewritten as $\sum_{j=1}^{D}  T_j\min\{L,j\}$, or equivalently, ${\sum_{j=1}^{L} T_j j+ \sum_{j=L+1}^{D} T_j L}$; 
the constraint (i) reduces to $T_j\in \mathbb{N}_0\triangleq\{0,1,\dots\}$ for every $1\leq j<D$, and ${T_D\in \mathbb{N}\triangleq \{1,2,\dots\}}$; 
and the constraint (ii) reduces to $\sum_{j=1}^{D} T_j j = K$. 
Thus, we need to solve the following integer linear programming (ILP) problem: 
\begin{eqnarray*}\label{eq:11}
& \hspace{-1cm} \mathrm{minimize} & \sum_{j=1}^{L} T_j j+ \sum_{j=L+1}^{D} T_j L,\\ \nonumber
& \hspace{-1cm} \mathrm{subject~to} & \sum_{j=1}^{D} T_j j= K,\\ \nonumber
&& T_1,\dots,T_{D-1}\in \mathbb{N}_0, T_D\in \mathbb{N}. 
\end{eqnarray*}
Solving this ILP using the Gomory's cutting-plane algorithm~\cite{MMWW2002}, it follows that an optimal solution is given by $T_D = \lfloor {K}/{D}\rfloor$, $T_R = 1$, and $T_j=0$ for all ${j\not\in\{R,D\}}$, where ${R=K \pmod D}$, and the optimal value of the objective function is given by ${L\lfloor {K}/{D} \rfloor+\min\{L,R\}}$. 
This implies that 
\begin{equation}\label{eq:11.5}
\sum_{i=1}^{n} \min\{L,N_i\}\geq L\left\lfloor\frac{K}{D} \right\rfloor+\min\{L,R\}.
\end{equation}
Combining~\eqref{eq:10} and~\eqref{eq:11.5}, ${H(\mathbf{A})\geq (L\lfloor {K}/{D} \rfloor+\min\{L,R\})B}$, as was to be shown.
\end{proof}

\section{Proof of Theorem~\ref{thm:IPLT-LB}}\label{sec:IPLT-Ach}
In this section, we present an IPLT protocol, termed the \emph{Generalized Partition-and-Code with Partial Interference Alignment (GPC-PIA) protocol}, which achieves the capacity lower bound of Theorem~\ref{thm:IPLT-LB} for sufficiently large field size $q$. 
In particular, when $L\leq S$, the GPC-PIA protocol is applicable for any $q\geq D+R$, and when $L>S$, the GPC-PIA protocol is applicable for any $q\geq D+R$, provided that the matrix $\mathrm{V}$ generates a Generalized Reed-Solomon (GRS) code~\cite{R2006}.
Examples of this protocol are provided in 
the appendix. 

With a slight abuse of notation, 
we denote by $\mathrm{W}$ (or $\overline{\mathrm{W}}$) a sequence of length $D$ (or $K-D$), instead of a set of size $D$ (or $K-D$), that is initially constructed by randomly permuting the $D$ message indices in the demand's support $\mathrm{W}$ (or the $K-D$ message indices in the complement of the demand's support ${[K]\setminus \mathrm{W}}$).  
Also, we denote by $\mathrm{V}$ an $L\times D$ matrix that is initially constructed by permuting the columns of the demand's coefficient matrix $\mathrm{V}$, according to the permutation used for constructing $\mathrm{W}$.

The GPC-PIA protocol consists of three steps as described below. 


\vspace{0.125cm}
\textbf{Step 1:} 
The user constructs a matrix $\mathrm{G}$ and a permutation $\pi$, and sends them as the query $\mathrm{Q}^{[\mathrm{W},\mathrm{V}]}$ to the server.
Depending on whether (i) $L\leq S$, or (ii) $L>S$, the construction of the matrix $\mathrm{G}$ and the permutation $\pi$ is different. 
We describe the construction for each of these two cases separately.

\subsection{Case (i)} 
In this case, $L\leq S$. 
Let $n\triangleq \lfloor {K}/{D}\rfloor-1$, $m\triangleq {R}/{S}+1$, and $t\triangleq {D}/{S}-1$. 
Note that $t+m = {(D+R)}/{S}$.

\subsubsection{Construction of the matrix $\mathrm{G}$}
The user constructs an $L(n+m)\times K$ matrix $\mathrm{G}$,
\begin{equation}\label{eq:12}
\mathrm{G} = \begin{bmatrix} 
\mathrm{G}_1 & 0 & \dots & 0 & 0 \\
0 & \mathrm{G}_2 & \dots & 0 & 0\\
\vdots & \vdots & \ddots & \vdots & \vdots \\
0 & 0 & \dots & \mathrm{G}_{n} & 0\\ 
0 & 0 & \dots & 0 & \mathrm{G}_{n+1} 
\end{bmatrix},
\end{equation} where the blocks $\mathrm{G}_1,\dots\mathrm{G}_{n}$ are $L\times D$ matrices, and the block $\mathrm{G}_{n+1}$ is an ${Lm\times (D+R)}$ matrix. 
The blocks $\mathrm{G}_1,\dots,\mathrm{G}_n,\mathrm{G}_{n+1}$ are constructed according to a randomized procedure as follows. 

The user randomly selects one of the blocks $\mathrm{G}_1,\dots,\mathrm{G}_{n+1}$, where the probability of selecting the block $\mathrm{G}_i$ for ${i\in [n]}$ is ${D}/{K}$, and the probability of selecting the block $\mathrm{G}_{n+1}$ is ${(D+R)}/{K}$. 
Let $b$ be the index of the selected block. 
Depending on whether $1\leq b\leq n$ or $b=n+1$, the description of the protocol is different.  

For the case of $1\leq b\leq n$, 
the user takes $\mathrm{G}_{b}=\mathrm{V}$, 
and takes $\mathrm{G}_i$ for each $i\in [n]\setminus \{b\}$ to be a randomly generated ${L\times D}$ MDS matrix.
The existence of such MDS matrices is guaranteed if the field size $q\geq D$. 
The construction of $\mathrm{G}_{n+1}$ is, however, different. 
First, the user randomly generates an $L\times (D+R)$ MDS matrix $\mathrm{C}$, and partitions the $D+R$ columns of $\mathrm{C}$ into $t+m$ ($={(D+R)}/{S}$) column-blocks each of size $L\times S$, i.e., $\mathrm{C} = [\mathrm{C}_1,\dots,\mathrm{C}_{t+m}]$, where $\mathrm{C}_i$ for $i\in [t+m]$ is an $L\times S$ matrix. 
(Such an MDS matrix $\mathrm{C}$ exists so long as the field size $q\geq D+R$.)
Then, the user constructs ${\mathrm{G}_{n+1} = [\mathrm{B}_{1},\mathrm{B}_{2}]}$, where $\mathrm{B}_{1}$ and $\mathrm{B}_{2}$ are given by
\begin{equation*}
\begin{bmatrix}
\alpha_1\omega_{1,1} \mathrm{C}_1 & \dots & \alpha_t\omega_{1,t} \mathrm{C}_t\\ 
\vdots & \vdots & \vdots \\ 
\alpha_1\omega_{m,1} \mathrm{C}_1 & \dots & \alpha_t\omega_{m,t} \mathrm{C}_t
\end{bmatrix} 
\end{equation*} and
\begin{equation*}
\begin{bmatrix}
\alpha_{t+1} \mathrm{C}_{t+1} &   &  \\ 
 & \ddots &  \\
 &  & \alpha_{t+m} \mathrm{C}_{t+m} \\
\end{bmatrix}
\end{equation*} 
respectively. 
Here, the parameters $\alpha_1,\dots,\alpha_{t+m}$ are $t+m$ randomly chosen elements from $\mathbbmss{F}_q\setminus \{0\}$, and the parameters $\omega_{i,j}\triangleq (x_i-y_j)^{-1}$ for $i\in [m]$ and $j\in [t]$, where $x_1,\dots,x_m$ and $y_1,\dots,y_t$ are $t+m$ distinct elements chosen at random from $\mathbbmss{F}_q$. 
Note that $\omega_{i,j}$ is the entry $(i,j)$ of an $m\times t$ Cauchy matrix.


Now, consider the case of $b=n+1$. 
For each ${i\in [n]}$, the user takes $\mathrm{G}_i$ to be a randomly generated $L\times D$ MDS matrix, and constructs $\mathrm{G}_{n+1}$ with a structure similar to that in the previous case, 
but the column-blocks $\mathrm{C}_1,\dots,\mathrm{C}_{t+m}$ and the parameters $\alpha_{1},\dots,\alpha_{t+m}$ are chosen differently.

\subsubsection*{Construction of column-blocks $\mathrm{C}_1,\dots,\mathrm{C}_{t+m}$} To construct the column-blocks $\mathrm{C}_i$'s, the user proceeds as follows. 
\begin{itemize}
    \item First, the user partitions the $D$ columns of $\mathrm{V}$ into ${t+1}$ ($=\lfloor {K}/{D}\rfloor$) column-blocks each of size $L\times S$, i.e., 
${{\mathrm{V}}= [{\mathrm{V}}_1,\dots,{\mathrm{V}}_{t+1}]}$, 
where $\mathrm{V}_i$ for $i\in [t+1]$ is an $L\times S$ matrix. 
\item The user then randomly chooses $t+1$ indices from $[t+m]$, say, $r$ indices $k_1,\dots,k_r \in [t]$ and $s$ indices $l_1,\dots,l_s \in [t+1:t+m]$ such that $r+s=t+1$. 
Note that the column-blocks indexed by $k_1,\dots,k_s$ belong to the matrix $\mathrm{B}_1$, and the column-blocks indexed by $l_1,\dots,l_s$ belong to the matrix $\mathrm{B}_2$.  
\item Then, the user takes $\mathrm{C}_{k_j}={\mathrm{V}}_j$ for $j\in [r]$, and $\mathrm{C}_{l_j}={\mathrm{V}}_{r+j}$ for $j\in [s]$.  
\item The user then randomly generates the rest of $\mathrm{C}_i$'s for ${i\not\in \{k_1,\dots,k_r,l_1,\dots,l_{s}\}}$ such that the matrix $\mathrm{C} = [\mathrm{C}_1,\dots,\mathrm{C}_{t+m}]$ is an MDS matrix.
\end{itemize}


%
\subsubsection*{Choice of parameters $\alpha_1,\dots,\alpha_{t+m}$} Before explaining the process of choosing the parameters $\alpha_i$'s, we introduce a few more definitions and notations. 

We refer to the $L\times (D+R)$ submatrix of $\mathrm{G}_{n+1}$ formed by the $i$th block of $L$ rows as the $i$th row-block of $\mathrm{G}_{n+1}$. 
Note that $\mathrm{G}_{n+1}$ has $m$ row-blocks.


Note that $\{k_1,\dots,k_{r},l_1,\dots,l_s\}$ is the index set of those column-blocks of $\mathrm{C}$ that correspond to the column-blocks of ${\mathrm{V}}$. 
Note also that every $\mathrm{C}_{i}$ for $i\in \{k_1,\dots,k_r\}$ appears in all row-blocks of $\mathrm{G}_{n+1}$, and 
every $\mathrm{C}_{i}$ for $i\in \{l_1,\dots,l_s\}$ appears only in the $(i-t)$th row-block of $\mathrm{G}_{n+1}$.

We define $\{k_{r+1},\dots,k_t\}\triangleq[t]\setminus \{k_1,\dots,k_r\}$ as the index set of those column-blocks of $\mathrm{C}$ belonging to the matrix $\mathrm{B}_1$ that do not correspond to any column-blocks of $\mathrm{V}$, and $\{l_{s+1},\dots,l_m\}\triangleq[t+1:m]\setminus \{l_1,\dots,l_s\}$ as the index set of those column-blocks of $\mathrm{C}$ belonging to the matrix $\mathrm{B}_2$ that do not correspond to any column-blocks of $\mathrm{V}$.

The parameters $\alpha_i$'s are to be chosen such that, by performing row-block operations on $\mathrm{G}_{n+1}$, the user can construct an $L\times (D+R)$ matrix---composed of $t+m$ column-blocks, each of size $L\times S$---that satisfies the following two conditions: 
\begin{itemize}
    \item[(a)] The column-blocks indexed by $\{k_{r+1},\dots,k_t\}$ and $\{l_{s+1},\dots,l_{m}\}$ are all-zero;
    \item[(b)] The column-blocks indexed by $\{k_{1},\dots,k_r\}$ are $\mathrm{C}_{k_1},\dots,\mathrm{C}_{k_r}$, and the column-blocks indexed by $\{l_{1},\dots,l_{s}\}$ are $\mathrm{C}_{l_{1}},\dots,\mathrm{C}_{l_s}$.
\end{itemize}


To perform row-block operations on $\mathrm{G}_{n+1}$, 
the user multiplies the $(i-t)$th row-block of $\mathrm{G}_{n+1}$ by a nonzero coefficient $c_{i}$ for ${i\in\{l_1,\dots,l_s\}}$. 
Let $\mathrm{c} \triangleq [c_{l_1},\dots,c_{l_s}]^{\transpose}$. 

Followed by choosing $\alpha_{k_{r+1}},\dots,\alpha_{k_{t}}$ randomly from ${\mathbbmss{F}_q\setminus \{0\}}$, 
it is easy to verify that the condition (a) is met so long as the vector $\mathrm{M}_1\mathrm{c}$ is all-zero, where  
\begin{equation*}
\mathrm{M}_1 \triangleq 
\begin{bmatrix}
\omega_{l_1-t,k_{r+1}} & \omega_{l_2-t,k_{r+1}} & \dots & \omega_{l_s-t,k_{r+1}}\\
\vdots & \vdots & \vdots & \vdots \\
\omega_{l_1-t,k_t} & \omega_{l_2-t,k_t} & \dots & \omega_{l_s-t,k_t}
\end{bmatrix}.
\end{equation*} Since $\mathrm{M}_1$ is a Cauchy matrix by the choice of $\omega_{i,j}$'s, every $(s-1)\times (s-1)$ submatrix of $\mathrm{M}_1$ is invertible~\cite{R2006}. 
This implies that, for any arbitrary ${c_{l_1}\neq 0}$, there is a unique solution for the vector $\mathrm{c}$ such that $\mathrm{M}_1\mathrm{c}$ is all-zero, and the vector $\mathrm{c}$ does not contain any zeros. 
Given the vector $\mathrm{c}$, 
it is easy to see that the condition (b) is met so long as ${\alpha_{l_1} = 1/c_{l_1}}, \dots, {\alpha_{l_s} = 1/c_{l_s}}$, and 
$\alpha_{k_1},\dots,\alpha_{k_{r}}$ are such that the vector $\mathrm{M}_2\mathrm{c}$ is all-one, where  
\begin{equation*}
\mathrm{M}_2 \triangleq 
\begin{bmatrix}
\alpha_{k_1}\omega_{l_1-t,k_1} & \dots & \alpha_{k_1}\omega_{l_s-t,k_1}\\
\vdots & \vdots & \vdots \\
\alpha_{k_r}\omega_{l_1-t,k_r} & \dots & \alpha_{k_r}\omega_{l_s-t,k_r}
\end{bmatrix}.
\end{equation*} 
Solving 
for the variables $\alpha_{k_1},\dots,\alpha_{k_r}$, it follows that \[\alpha_{k_j} = \left(\sum_{i=1}^{s} c_{l_i}\omega_{l_i-t,k_j}\right)^{-1}\] for ${j\in [r]}$. 
Note that $\alpha_{k_1},\dots,\alpha_{k_r}$ are nonzero, and $\sum_{i=1}^{s} c_{l_i}\omega_{l_i-t,k_j}$ is nonzero. 
This can be easily shown as follows. 
Let $\mathrm{M}$ be a matrix formed by vertically concatenating $\mathrm{M}_1$ and the $j$th row of $\mathrm{M}_2$ normalized by $\alpha_{k_j}$. 
Note that the first $s-1$ components of the vector $\mathrm{M}\mathrm{c}$ are all zero because $\mathrm{M}_1\mathrm{c}$ is all-zero, 
and the last component of $\mathrm{M}\mathrm{c}$ is $\sum_{i=1}^{s} c_{l_i}\omega_{l_i-t,k_j}$. 
If $\sum_{i=1}^{s} c_{l_i}\omega_{l_i-t,k_j}$ is zero, then $\mathrm{M}\mathrm{c}$ is all-zero.
Since the vector $\mathrm{c}$ is not all-zero, then the rows of $\mathrm{M}$ must be linearly dependent.
This is, however, a contradiction because $\mathrm{M}$ is a Cauchy matrix, and hence, the rows of $\mathrm{M}$ are linearly independent.
Thus, $\sum_{i=1}^{s} c_{l_i}\omega_{l_i-t,k_j}$ is nonzero. 

Lastly, the user chooses $\alpha_{l_{s+1}},\dots,\alpha_{l_{m}}$ randomly from ${\mathbbmss{F}_q\setminus \{0\}}$. 
This concludes the process of choosing the parameters $\alpha_1,\dots,\alpha_{t+m}$. 



\subsubsection{Construction of the permutation $\pi$}
For the ease of notation, suppose ${\mathrm{W}=\{i_1,\dots,i_{D}\}}$ and ${\overline{\mathrm{W}}={\{i_{D+1},\dots,i_{K}\}}}$. 

First, consider the case of $1\leq b\leq n$. 
The user constructs the permutation $\pi$ as follows: $\pi(i_j) = (b-1)D+j$ for ${j\in [D]}$, and $\pi(i_j)$ for ${j\in [D+1:K]}$ is randomly chosen from ${[K]\setminus \{\pi(i_k): k\in [j-1]\}}$. 

Next, consider the case of $b = n+1$. 
Recall that $k_1,\dots,k_{r},l_1,\dots,l_{s}$ are the indices of the column-blocks of $\mathrm{C}$ that correspond to the column-blocks of $\mathrm{V}$.  
Let ${e_j\triangleq \lceil {j}/{S}\rceil}$ for ${j\in [rS]}$, and ${e_j\triangleq \lceil {j}/{S}\rceil-r}$ for ${j\in [rS+1:D]}$, and ${f_j \triangleq S}$ if ${S\mid j}$, and ${f_j\triangleq j\pmod S}$ if ${S\nmid j}$. 
The user constructs the permutation $\pi$ as follows: 
${\pi(i_j) = nD+(k_{e_j}-1)S+f_j}$ for ${j\in [rS]}$, ${\pi(i_j) = nD+(l_{e_j}-1)S+f_j}$ for ${j\in [rS+1:D]}$, and $\pi(i_j)$ for ${j\in [D+1:K]}$ is randomly chosen from ${[K]\setminus \{\pi(i_k): k\in [j-1]\}}$. 

\subsection{{Case (ii)}} 
Recall that in this case, $L>S$. 
Let $n\triangleq \lfloor {K}/{D}\rfloor-1$, and ${m\triangleq {R}/{L}+1}$. 
Note that here $n$ is defined the same as in the case~(i), but $m$ is defined differently. 

\subsubsection{Construction of the matrix $\mathrm{G}$}
The user constructs an $L(n+m)\times K$ matrix $\mathrm{G}$ with a structure similar to~\eqref{eq:12}, where $\mathrm{G}_1,\dots,\mathrm{G}_n$ are constructed similarly as in the case (i), but the construction of $\mathrm{G}_{n+1}$ is different. 
Below, we explain how $\mathrm{G}_{n+1}$ is constructed in this case. 

For the case of $1\leq b\leq n$, the user takes $\mathrm{G}_{n+1}$ to be a randomly generated $(L+R)\times (D+R)$ MDS matrix. 
Such an MDS matrix exists so long as the field size $q\geq D+R$. 

For the case of $b=n+1$, the user constructs $\mathrm{G}_{n+1}$ using a similar technique as in the JPLT protocol of~\cite{HES2021JointJournal}. 
First, the user randomly chooses $D$ indices from $[D+R]$, say, $h_1,\dots,h_D$. 
The user then constructs a $(D-L)\times D$ parity-check matrix $\myLambda$ of the $[D,L]$ MDS code generated by $\mathrm{V}$. 
Then, the user constructs a ${(D-L)\times (D+R)}$ MDS matrix $\mathrm{H}$ such that $\myLambda$ is a submatrix of $\mathrm{H}$ formed by the columns indexed by $\{h_1,\dots,h_D\}$. 
The user then takes $\mathrm{G}_{n+1}$ to be an $(L+R)\times (D+R)$ generator matrix of the $[D+R,L+R]$ MDS code defined by the parity-check matrix $\mathrm{H}$. 

The existence of such a matrix $\mathrm{H}$---that satisfies the above conditions, depends in general on $D,L,R$, the field size $q$, and the structure of the matrix $\myLambda$ (or $\mathrm{V}$). Using Schwartz–Zippel lemma, it can be shown that such a matrix always exists when $q$ is sufficiently large. 
In addition, such a matrix can be constructed systematically for any $q\geq D+R$ when $\myLambda$ (or $\mathrm{V}$) is a Vandermonde matrix with distinct parameters (or more generally, the product of a Vandermonde matrix with distinct parameters and a diagonal matrix with nonzero entries on the main diagonal)~\cite{R2006}. 



\subsubsection{Construction of the permutation $\pi$}
Similarly as before, suppose ${\mathrm{W}}=\{i_1,\dots,i_{D}\}$ and $\overline{\mathrm{W}}=\{i_{D+1},\dots,i_{K}\}$.
For the case of ${1\leq b\leq n}$, the permutation $\pi$ is constructed the same as in the case (i), whereas, for the case of ${b=n+1}$, the construction is different from that in the case (i). 
In this case, the user constructs $\pi$ as follows: $\pi(i_j) = nD+h_{j}$ for ${j\in [D]}$, and $\pi(i_j)$ is randomly chosen from $[K]\setminus \{\pi(i_k): k\in [j-1]\}$ for ${j\in [D+1:K]}$.

\vspace{0.125cm} 
\textbf{Step 2:} Given the query $\mathrm{Q}^{[\mathrm{W},\mathrm{V}]}$, i.e., the matrix $\mathrm{G}$ and the permutation $\pi$, the server first constructs the matrix $\tilde{\mathrm{X}} \triangleq \pi(\mathrm{X})$ by permuting the rows of the matrix $\mathrm{X}$ according to the permutation $\pi$, i.e., for every $i\in [K]$, $\pi(i)$th row of $\tilde{\mathrm{X}}$ is the $i$th row of $\mathrm{X}$. 
Then, the server  
computes the matrix $\mathrm{Y}\triangleq \mathrm{G}\tilde{\mathrm{X}}$, and sends $\mathrm{Y}$ back to the user as the answer $\mathrm{A}^{[\mathrm{W},\mathrm{V}]}$.\vspace{0.125cm} 

\textbf{Step 3:} Upon receiving the answer $\mathrm{A}^{[\mathrm{W},\mathrm{V}]}$, i.e., the matrix $\mathrm{Y}$, the user recovers the demand matrix $\mathrm{Z}^{[\mathrm{W},\mathrm{V}]}$ as follows. 
Let $\mathrm{Y}_i$ for $i\in [n]$ be a submatrix of $\mathrm{Y}$ formed by the rows indexed by $[(i-1)L+1:iL]$, and let $\mathrm{Y}_{n+1}$ be a submatrix of $\mathrm{Y}$ formed by the rows indexed by $[nL+1:(n+m)L]$. 
For the case of $1\leq b\leq n$,  $\mathrm{Z}^{[\mathrm{W},\mathrm{V}]}$ can be recovered from the matrix $\mathrm{Y}_{b}$ for both cases (i) and (ii). 
For the case of ${b=n+1}$, $\mathrm{Z}^{[\mathrm{W},\mathrm{V}]}$ can be recovered 
by performing proper row-block or row operations on the augmented matrix $[\mathrm{G}_{n+1},\mathrm{Y}_{n+1}]$ for the case (i) or (ii), respectively.

\begin{lemma}\label{lem:IPLT-Ach}
The GPC-PIA protocol is an IPLT protocol, and achieves the rate ${(\lfloor {K}/{D}\rfloor+\min\{{R}/{S},{R}/{L}\})^{-1}}$. 
\end{lemma}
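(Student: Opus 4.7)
The plan is to verify three things in order: (i) that the user can always recover the demand $\mathrm{Z}^{[\mathrm{W},\mathrm{V}]}$ from the answer, (ii) that the query distribution satisfies $\Pr(i\in\mathbf{W}\mid\mathbf{Q}=\mathrm{Q})=D/K$ for every $i\in [K]$, and (iii) that the rate equals $(\lfloor K/D\rfloor+\min\{R/S,R/L\})^{-1}$. The main obstacle will be (ii), since it requires showing that the block index $b$ (which selects where the ``real'' demand lives inside the block-diagonal matrix $\mathrm{G}$) is not leaked by the query, and this reduces to a distributional statement about the Cauchy-padded block $\mathrm{G}_{n+1}$ in case (i) and about the MDS extension of $\myLambda$ in case (ii).

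For recoverability, I would handle the easy case first: when $1\le b\le n$, by construction $\mathrm{G}_b=\mathrm{V}$ and the permutation $\pi$ places the rows of $\mathrm{X}_{\mathrm{W}}$ (in the same column-order as $\mathrm{V}$) into positions $[(b-1)D+1:bD]$ of $\tilde{\mathrm{X}}$, so $\mathrm{Y}_b=\mathrm{V}\mathrm{X}_{\mathrm{W}}=\mathrm{Z}^{[\mathrm{W},\mathrm{V}]}$. For $b=n+1$ in case~(i), I would verify directly that multiplying the $(i-t)$-th row-block of $[\mathrm{G}_{n+1},\mathrm{Y}_{n+1}]$ by $c_{l_i}$ and adding, condition~(a) (enforced by $\mathrm{M}_1\mathrm{c}=0$ and the Cauchy non-singularity argument supplied in Step~1) zeros out the column-blocks indexed by $\{k_{r+1},\dots,k_t\}\cup\{l_{s+1},\dots,l_m\}$ of $\mathrm{G}_{n+1}$, while condition~(b) places precisely the column-blocks $\mathrm{V}_1,\dots,\mathrm{V}_{t+1}$ in the positions corresponding (via $\pi$) to $\mathrm{W}$. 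Hence the corresponding row-block operations applied to $\mathrm{Y}_{n+1}$ yield $\mathrm{V}\mathrm{X}_{\mathrm{W}}$. For $b=n+1$ in case~(ii), I would observe that since $\myLambda$ is a submatrix of $\mathrm{H}$ on the $D$ columns indexed by $\{h_1,\dots,h_D\}$, any codeword of the MDS code generated by $\mathrm{G}_{n+1}$, when punctured on the remaining $R$ coordinates, lies in the $[D,L]$ code generated by $\mathrm{V}$; the punctured codeword is determined by $\mathrm{Y}_{n+1}$ up to row operations (using that $\mathrm{G}_{n+1}$ is MDS and in particular systematic on any $L+R$ coordinates), and this recovers $\mathrm{V}\mathrm{X}_{\mathrm{W}}$.

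For individual privacy, the strategy is to first argue that conditioned on $\mathbf{Q}=\mathrm{Q}$, the posterior on $b$ equals the prior ($D/K$ for each $j\in [n]$ and $(D+R)/K$ for $j=n+1$). This is the hard part: I need to check that the distribution of $\mathbf{G}$ is identical whether $\mathbf{V}$ is embedded in block $b\in[n]$ (with the remaining blocks, including $\mathbf{G}_{n+1}$, being drawn fresh as random MDS matrices of the prescribed size) or $b=n+1$ (where the column-blocks of $\mathbf{V}$ are slotted at a uniformly random $(t{+}1)$-subset of column-block positions of $\mathbf{C}$, and the remaining column-blocks plus the $\alpha$-parameters are randomized). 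Since $\mathbf{V}$ is uniform over $\mathbbm{V}$ and each random choice (MDS completion of $\mathbf{C}$, the scalars $\alpha_i$ for $i\not\in\{k_1,\dots,l_s\}$, and the positions $\{k_1,\dots,l_s\}$) is chosen marginally uniformly, I would verify that the two induced distributions on $\mathbf{G}$ coincide. Given this, I would then condition on $b$ and $\mathbf{Q}$ and partition $[K]$ according to which block $\pi(i)$ belongs to: if $\pi(i)\in[(j{-}1)D+1:jD]$ for some $j\in[n]$ then $i\in\mathbf{W}$ iff $b=j$, yielding probability $D/K$; if $\pi(i)\in[nD+1:K]$, the uniform choice of the $(t{+}1)$-subset (case~(i)) or of $\{h_1,\dots,h_D\}$ (case~(ii)) makes the conditional probability that $\pi(i)$ is a demand position equal to $D/(D+R)$, so the unconditional probability is $(D+R)/K\cdot D/(D+R)=D/K$.

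Finally, for the rate I would simply observe that $\mathbf{A}$ consists of $L(n+m)$ coded combinations that are linearly independent (which holds since each $\mathbf{G}_i$ for $i\in[n]$ is MDS and $\mathbf{G}_{n+1}$ has full row rank by the MDS property or by the Cauchy-block construction), so $H(\mathbf{A})=L(n+m)B$, whereas $H(\mathbf{Z})=LB$. Thus the rate equals $(n+m)^{-1}$, which in case~(i) is $(\lfloor K/D\rfloor+R/S)^{-1}$ and in case~(ii) is $(\lfloor K/D\rfloor+R/L)^{-1}$. Since case~(i) is invoked exactly when $L\le S$ (in which case $R/S\le R/L$) and case~(ii) when $L>S$ (in which case $R/L<R/S$), the achieved rate is $(\lfloor K/D\rfloor+\min\{R/S,R/L\})^{-1}$ in both cases.
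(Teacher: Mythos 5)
Your proposal is correct and follows essentially the same route as the paper's proof: recoverability via the row-block operations on $[\mathrm{G}_{n+1},\mathrm{Y}_{n+1}]$ enforced by $\mathrm{M}_1\mathrm{c}=0$ and the choice of the $\alpha_i$'s, privacy via the independence of $\mathbf{G}$ from $(\mathbf{W},\boldsymbol{\pi})$ followed by a Bayes computation that yields $D/K$ in both the $\pi(i)\leq nD$ and $\pi(i)>nD$ regimes (your $D/(D+R)$ factor is exactly the paper's ratio $T_2/T_1$), and the rate from $H(\mathbf{A})=L(n+m)B$. Your reorganization of the privacy argument around the posterior of the block index $b$, and your explicit treatment of case (ii), are presentational differences only; the paper proves case (i) and asserts case (ii) follows by the same arguments.
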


\begin{proof}
To avoid repetition, we only present the proof for the case (i). 
Using the same arguments, the results can be shown for the case (ii).  

In the case (i), it is easy to see that the rate of the protocol is ${LB/(L(n+m)B)} = {(n+m)^{-1}} = (\lfloor {K}/{D}\rfloor+{R}/{S})^{-1}$. 
This is because the matrix $\mathrm{G}$ has $L(n+m)$ rows, and the matrix $\mathrm{Y}=\mathrm{G}\tilde{\mathrm{X}}$ contains $L(n+m)$ independently and uniformly distributed row-vectors of length $N$ with entries from $\mathbbmss{F}_q$, each with entropy $B=N\log_2 q$. 

The proof of recoverability is as follows. 
For the case of $1\leq b\leq n$, it is straightforward to see that $\mathrm{Y}_b = \begin{bmatrix}0_{L\times (b-1)D} & \mathrm{G}_b & 0_{L\times (K-bD)}\end{bmatrix}\tilde{\mathrm{X}} = \mathrm{G}_b\tilde{\mathrm{X}}_{[(b-1)D+1:bD]} = \mathrm{V}{\mathrm{X}}_{\mathrm{W}}=\mathrm{Z}^{[\mathrm{W},\mathrm{V}]}$. 
This is because $\mathrm{G}_b = \mathrm{V}$ by Step~1 of the protocol, $\pi(\mathrm{W})=[(b-1)D+1:bD]$ by the construction of the permutation $\pi$ in Step~1 of the protocol, and $\tilde{\mathrm{X}}_{\pi(\mathrm{W})} = \mathrm{X}_{\mathrm{W}}$ by Step~2 of the protocol.
Now, consider the case of $b=n+1$. 
Recall that the row-block operations on $\mathrm{G}_{n+1}$ are performed on the row-blocks indexed by $\{l_1-t,\dots,l_s-t\}$. 
Recall also that the vector $\mathrm{c}=[c_{l_1},\dots,c_{l_s}]^{\transpose}$ defined in Step~1 of the protocol represents the coefficients required for performing these row-block operations.
Let $\tilde{\mathrm{G}}_{n+1} = [\tilde{\mathrm{B}}_1,\tilde{\mathrm{B}}_2]$ be a submatrix of $\mathrm{G}_{n+1} = [\mathrm{B}_1,\mathrm{B}_2]$ formed by the row-blocks indexed by $\{l_1-t,\dots,l_s-t\}$. 
Note that $\tilde{\mathrm{B}}_1$ and $\tilde{\mathrm{B}}_2$ are given by
\begin{equation*}
\setlength\arraycolsep{4pt}
\begin{bmatrix}
\alpha_1\omega_{l_1-t,1} \mathrm{C}_1 & \dots & \alpha_t\omega_{l_1-t,t} \mathrm{C}_t\\ 
\vdots & \vdots & \vdots \\ 
\alpha_1\omega_{l_s-t,1} \mathrm{C}_1 & \dots & \alpha_t\omega_{l_s-t,t} \mathrm{C}_t
\end{bmatrix},
\end{equation*} and\vspace{0.65cm} 
\begin{equation*}
\setlength\arraycolsep{5pt}
\begin{bmatrix}
\bovermat{$l_1-1$}{\hspace{0.25cm}0} & \alpha_{l_1}\mathrm{C}_{l_1} & \hspace{-0.5cm} \bovermat{$l_2-l_1-1$}{\hspace{0.5cm}0} &  &  &  & & & \\ 
& & 0 & \alpha_{l_2}\mathrm{C}_{l_2} & 0 &  & & \\ 
& & & & & \ddots & & & \\
&  &  & & & & \hspace{-0.5cm}\bundermat{$l_{s}-l_{s-1}-1$}{\hspace{0.625cm} 0} & \hspace{0.125cm}\alpha_{l_s}\mathrm{C}_{l_s} & \hspace{-0.35cm} \bundermat{$t+m-l_s$}{\hspace{0.475cm}0} & 
\end{bmatrix},\vspace{0.65cm}
\end{equation*} respectively, where $0$'s are $L\times S$ all-zero matrices. 
Thus, multiplying the $s$ row-blocks of the matrix $\tilde{\mathrm{G}}_{n+1}$ by the components of the vector $\mathrm{c}$, namely, $c_{l_1},\dots,c_{l_s}$, and
summing the row-blocks of the resulting matrix, it follows that: 
(i) the column-blocks indexed by $\{k_1,\dots,k_r\}$ are given by $\mathrm{C}_{k_1},\dots,\mathrm{C}_{k_r}$, or equivalently, $\mathrm{V}_{1},\dots,\mathrm{V}_{r}$, because $\sum_{i=1}^{s} \alpha_{k_j}\omega_{l_i-t,k_j}=1$ by the choice of $\alpha_{k_j}$ for $j\in [r]$ in Step~1 of the protocol; 
(ii) the column-blocks indexed by $\{k_{r+1},\dots,k_t\}$ are all zero, because for $j\in [r+1:t]$, $\sum_{i=1}^{s} \alpha_{k_j}\omega_{l_i-t,k_j}=\alpha_{k_j}\sum_{i=1}^{s} \omega_{l_i-t,k_j}$, and $\sum_{i=1}^{s} \omega_{l_i-t,k_j}$ is the $(j-r)$th component of the vector $\mathrm{M}_1\mathrm{c}$, which is itself an all-zero vector, as discussed in Step~1 of the protocol; 
(iii) the column-blocks indexed by $\{l_1,\dots,l_s\}$ are given by $\mathrm{C}_{l_1},\dots,\mathrm{C}_{l_s}$, or equivalently, $\mathrm{V}_{r+1},\dots,\mathrm{V}_{t+1}$, because $c_{l_j}\alpha_{l_j}=1$ for $j\in [s]$ by the choice of $\alpha_{l_j}$ for $j\in [s]$ in Step~1 of the protocol;
and (iv) the column-blocks indexed by $\{l_{s+1},\dots,l_m\}$ are all-zero matrices. 
Thus, by performing these row-block operations on $\tilde{\mathrm{G}}_{n+1}$, the user obtains a single row-block that contains $t+m$ column-blocks, each of size $L\times S$, where the $t+1$ columns-blocks indexed by $\{k_1,\dots,k_r\}\cup \{l_1,\dots,l_s\}$ form the matrix $\mathrm{V}$, and the rest of the column-blocks are all-zero matrices. 
Let $\tilde{\mathrm{W}}\triangleq \cup_{i\in \{k_1,\dots,k_r\}\cup \{l_1,\dots,l_s\}} \tilde{\mathrm{W}}_i$, where ${\tilde{\mathrm{W}}_i\triangleq [nD+(i-1)S+1:nD+iS]}$. 
Note that ${\tilde{\mathrm{X}}_{\tilde{\mathrm{W}}} = \mathrm{X}_{\mathrm{W}}}$. 
This is because $\tilde{\mathrm{W}}=\pi(\mathrm{W})$ by the construction of the permutation $\pi$ in Step~1 of the protocol.  
Thus, the user can perform these row-block operations on $\tilde{\mathrm{Y}}_{n+1}\triangleq \tilde{\mathrm{G}}_{n+1}\tilde{\mathrm{X}}$, and recover the demand matrix $\mathrm{V}\tilde{\mathrm{X}}_{\tilde{\mathrm{W}}}=\mathrm{V}\mathrm{X}_{\mathrm{W}}$. 
This completes the proof of recoverability.

Next, we show that the individual privacy condition is satisfied. 
Let $\tilde{\mathrm{X}} \triangleq [X_{i_1}^{\transpose},\dots,X_{i_K}^{\transpose}]^{\transpose}$. 
For each ${j\in [n]}$, let $\mathrm{I}_j$ be the set of $j$th group of $D$ elements in $\{i_1,\dots,i_{nD}\}$, and for each $j\in [t+m]$, let $\mathrm{I}_{n+j}$ be the set of $j$th group of $S$ elements in $\{i_{nD+1},\dots,i_K\}$. 
Let $T_1\triangleq C_{t+m,t+1}$. 
For each $j\in [n]$, let $\mathrm{W}_j\triangleq \mathrm{I}_j$, and for each $j\in [T_1]$, let $\mathrm{W}_{n+j} = \cup_{k\in \mathrm{J}_j} \mathrm{I}_{k}$, where $\mathrm{J}_1,\dots,\mathrm{J}_{T_1}$ are all $(t+1)$-subsets of ${[n+1:n+t+m]}$.
It is easy to verify that $\mathrm{W}_1,\dots,\mathrm{W}_n,\mathrm{W}_{n+1},\dots,\mathrm{W}_{n+t+m}$ are the only possible demand's supports, from the server's perspective, given the user's query. 

Let ${\mathrm{Q} \triangleq \{\mathrm{G},\pi\}}$ be the user's query. 
To prove that the individual privacy condition is satisfied, we need to show that ${\Pr(i\in \mathbf{W}|\mathbf{Q}=\mathrm{Q})} = {\Pr(i\in \mathbf{W})}={D}/{K}$ for all $i\in [K]$. 
Fix an arbitrary $i\in [K]$. In the following, we consider two different cases: (i) ${\pi(i)\leq nD}$, and (ii) ${\pi(i)>nD}$. 

First, consider the case (i). 
In this case, there exists a unique $j\in [n]$ such that $i\in \mathrm{W}_j$. Thus, \[{\Pr(i\in \mathbf{W}|\mathbf{Q}=\mathrm{Q})} = {\Pr(\mathbf{W}=\mathrm{W}_j|\mathbf{Q}=\mathrm{Q})}.\] 
By applying Bayes' rule, we have
\begin{align}
& \Pr(\mathbf{W}=\mathrm{W}_j|\mathbf{Q}=\mathrm{Q})\nonumber\\
& = \frac{\Pr(\mathbf{Q}=\mathrm{Q}|\mathbf{W}=\mathrm{W}_j)}{\Pr(\mathbf{Q}=\mathrm{Q})}\Pr(\mathbf{W}=\mathrm{W}_j). \label{eq:13} 
\end{align}
Recall that $\Pr(\mathbf{W}=\mathrm{W}_j)={1}/{C_{K,D}}$. 
By the construction, the structure of $\mathrm{G}$, i.e., the size and the position of the blocks $\mathrm{G}_1,\dots,\mathrm{G}_{n+1}$, does not depend on $(\mathrm{W},{\pi})$, and the matrix $\mathrm{V}$ and all other MDS matrices used in the construction of $\mathrm{G}$ are generated independently from $(\mathrm{W},{\pi})$. 
Thus, $\mathbf{G}$ is independent of $(\mathbf{W},\boldsymbol{\pi})$. 
Obviously, $\Pr(\mathbf{Q}=\mathrm{Q}) = {\Pr(\mathbf{G}=\mathrm{G},\boldsymbol{\pi}=\pi)}$. 
Then, we can write 
\begin{align}
& \frac{\Pr(\mathbf{Q}=\mathrm{Q}|\mathbf{W}=\mathrm{W}_j)}{\Pr(\mathbf{Q}=\mathrm{Q})}\nonumber\\
& = \frac{\Pr(\mathbf{G} = \mathrm{G})\Pr(\boldsymbol{\pi}=\pi|\mathbf{W}=\mathrm{W}_j)}{\Pr(\mathbf{G}=\mathrm{G})\Pr(\boldsymbol{\pi}=\pi)} \nonumber \\ 
& = \frac{\Pr(\boldsymbol{\pi}=\pi|\mathbf{W}=\mathrm{W}_j)}{\Pr(\boldsymbol{\pi}=\pi)}. \label{eq:14}
\end{align} 
Obviously, $\Pr(\boldsymbol{\pi}=\pi) = {1}/{K!}$.
Given $\mathbf{W} = \mathrm{W}_j$, the conditional probability of the event of $\boldsymbol{\pi}=\pi$ is equal to the joint probability of the two events ${\boldsymbol{\pi}(\mathbf{W}) = \pi(\mathrm{W}_j)}$ and ${\boldsymbol{\pi}([K]\setminus \mathbf{W}) = \pi([K]\setminus\mathrm{W}_j)}$.
Let $\mathbf{b}$ be a random variable representing the index of the block selected by the user in Step~1 of the protocol. 
Then, we have 
\begin{align}\label{eq:12.1}
{\Pr(\boldsymbol{\pi}(\mathbf{W}) = \pi(\mathrm{W}_j))} & = \Pr(\mathbf{b} = j)\times \frac{1}{D!}=\frac{D}{K}\times \frac{1}{D!}.    
\end{align}
In addition, by the construction of $\pi$ as in Step~1 of the protocol, we have 
\begin{align}\label{eq:12.2}
\Pr(\boldsymbol{\pi}([K]\setminus \mathbf{W}) = \pi([K]\setminus \mathrm{W}_j)) = \frac{1}{(K-D)!}.
\end{align} 
By~\eqref{eq:12.1} and~\eqref{eq:12.2}, we have
\begin{equation}\label{eq:4.12}
\Pr(\boldsymbol{\pi}=\pi|\mathbf{W}=\mathrm{W}_j) = \frac{D}{K}\times\frac{1}{D!}\times \frac{1}{(K-D)!}.    
\end{equation}
Combining~\eqref{eq:13}-\eqref{eq:4.12}, we have
\begin{equation}\label{eq:4.13}
\Pr(i\in \mathbf{W}|\mathbf{Q}=\mathrm{Q}) =
K!\times\frac{D}{K}\times\frac{1}{K!} =\frac{D}{K}.
\end{equation}

Now, consider the case (ii).
Let $T_2\triangleq C_{t+m-1,t}$.
Note that $T_2/T_1 = D/(D+R)$. 
One can easily verify that there exist distinct indices $j_1,\dots,j_{T_2}\in [T_1]$ 
such that ${i\in \mathrm{W}_{n+j_k}}$ for all $k\in [T_2]$. 
Thus, we can write
\begin{align}
& {\Pr(i\in \mathbf{W}|\mathbf{Q}=\mathrm{Q})} \nonumber\\
& = {\sum_{k=1}^{T_{2}}\Pr(\mathbf{W}=\mathrm{W}_{n+j_k}|\mathbf{Q}=\mathrm{Q})}\nonumber\\
&  = \sum_{k=1}^{T_{2}} \frac{\Pr(\mathbf{Q}=\mathrm{Q}|\mathbf{W}=\mathrm{W}_{n+j_k})}{\Pr(\mathbf{Q}=\mathrm{Q})}\Pr(\mathbf{W}=\mathrm{W}_{n+j_k}) \nonumber\\
&  = \sum_{k=1}^{T_{2}} \frac{\Pr(\mathbf{G} = \mathrm{G})\Pr(\boldsymbol{\pi}=\pi|\mathbf{W}=\mathrm{W}_{n+j_k})}{\Pr(\mathbf{G}=\mathrm{G})\Pr(\boldsymbol{\pi}=\pi)}\times\frac{1}{C_{K,D}}\nonumber\\
&  = \sum_{k=1}^{T_{2}} \frac{\Pr(\boldsymbol{\pi}=\pi|\mathbf{W}=\mathrm{W}_{n+j_k})}{\Pr(\boldsymbol{\pi}=\pi)}\times\frac{1}{C_{K,D}}\nonumber\\
&  = \sum_{k=1}^{T_{2}} \left(K!\times\frac{D+R}{K}\times \frac{1}{T_{1}}\times\frac{1}{D!}\times\frac{1}{(K-D)!}\times \frac{1}{C_{K,D}}\right)\nonumber\\
&  = T_{2} \left(K!\times\frac{D+R}{K}\times \frac{1}{T_{1}}\times\frac{1}{D!}\times\frac{1}{(K-D)!}\times \frac{1}{C_{K,D}}\right)\nonumber\\
& = K!\times \frac{T_{2}}{T_{1}}\times \frac{D+R}{K}\times \frac{1}{K!} = \frac{D}{D+R}\times \frac{D+R}{K} = \frac{D}{K}.\label{eq:4.14}
\end{align} 
By~\eqref{eq:4.13} and~\eqref{eq:4.14}, we have ${\Pr(i\in \mathbf{W}|\mathbf{Q}=\mathrm{Q})}={D}/{K}$ for all $i\in [K]$. This completes the proof of individual privacy. 
\end{proof}

\section{Conclusion and Future Work}\label{sec:Con}
In this work, we considered the problem of single-server Private Linear Transformation (PLT) with individual privacy guarantees (or IPLT).
This problem includes 
a single remote server that stores a dataset of $K$ messages, and a user that wishes to compute $L$ linear combinations of a $D$-subset of the messages. 
The goal is to perform the computation by downloading the minimum possible amount of information from the server, while keeping the identity of every individual message required for the user's computation private. 
The IPLT problem generalizes the problems of single-server Private Information Retrieval (PIR) with individual privacy (or IPIR) and single-server Private Linear Computation (PLC) with individual privacy (or IPLC).

We focused on the setting in which the coefficient matrix of the required linear combinations is a maximum distance separable (MDS) matrix. 
For this setting, we established lower and upper bounds on the capacity of IPLT, where the capacity is defined as the supremum of all achievable download rates. 
We also showed that our bounds are tight under certain conditions. 
Comparing our results with those for the problem of single-server PLT under the stricter notion of joint privacy, we showed that IPLT can be performed more efficiently than PLT with joint privacy, in terms of the download cost, for a wide range of problem parameters.   




Several problems---closely related to the IPLT problem---are left open. 
Below, we list a few of these problems. 
\begin{itemize}
    \item[1)] The capacity of IPLT for the setting being considered in this work remains open in general. 
    In addition, the capacity of IPLT for the setting in which the coefficient matrix of the required linear combinations is full-rank (but not necessarily MDS) is still open. 
    \item[2)] Characterizing the capacity of IPLT in the presence of a prior side information is another direction for future research. 
    This research direction is motivated by the recent developments in IPIR and IPLC with side information~\cite{HKRS2019,HS2020}.
    Inspired by these works, different types of individual privacy guarantees can be considered for IPLT. 
    For instance, one may need to protect only the identity of every individual message required for the computation (and not the identities of the side information messages); 
    or it may be needed to protect the identity of every individual message which is required for the computation, or belongs to the side information. 
    \item[3)] Another important direction for research is to establish the fundamental limits of the multi-server setting of the PLT problem with individual privacy guarantees. 
    This problem subsumes the problems of multi-server PIR and multi-server PLC with individual privacy guarantees.
    These problems have not been studied yet, and the advantage of the individual privacy requirement over the joint privacy requirement in the multi-server setting of PIR or PLC remains unknown.
\end{itemize}


\appendix[Illustrative Examples of the GPC-PIA Protocol]
In this appendix, we provide three illustrative examples of the GPC-PIA protocol. 
Example~\ref{ex:1} corresponds to a scenario in which $D$ divides $K$, and Examples~\ref{ex:2} and~\ref{ex:3} correspond to scenarios with $L\leq S$ and $L>S$, respectively. 


\begin{example}\label{ex:1}
Consider a scenario in which the server has ${K=24}$ messages, ${X}_1,\dots,{X}_{24}\in\mathbbmss{F}_{17}^{N}$ for an arbitrary integer ${N\geq 1}$, and the user wishes to compute ${L=2}$ linear combinations of ${D=8}$ messages ${X}_2$, ${X}_4$, ${X}_5$, ${X}_7$, ${X}_8$, ${X}_{10}$,
${X}_{11}$, ${X}_{18}$, say, 
\begin{align*}
Z_{1} & =2{X}_2+15{X}_4+3{X}_5+6{X}_7+{X}_8\\ & \quad +4{X}_{10}+11{X}_{11}+13{X}_{18},	\\[0.125cm]
Z_{2} & =6{X}_2+9{X}_4+4{X}_5+3{X}_7+11{X}_8\\ & \quad +15{X}_{10}+13{X}_{11}+8{X}_{18}.
\end{align*}
For this example, 
$\mathrm{W}=\{2,4,5,7,8,10,11,18\}$, and 
\begin{equation*}
\mathrm{V} = 
\begin{bmatrix}
2 & 15 & 3 & 6 & 1 & 4 & 11 & 13\\
6 & 9 & 4 & 3 & 11 & 15 & 13 & 8
\end{bmatrix}.
\end{equation*} 

In this example, $D\mid K$. For such cases, the GPC-PIA protocol reduces to a simple partition-and-code scheme. 
In particular, the blocks $\mathrm{G}_{1},\dots,\mathrm{G}_{n+1}$ are all of the same size $L\times D$, and 
hence the matrix $\mathrm{G}$ will consist of $n+1$ blocks of equal size $L\times D$. 
Note that when $D\mid K$, $\mathrm{G}$ does not have any column-blocks that create partial interference alignment between the row-blocks of $\mathrm{G}$. 



We modify ${\mathrm{W}}$ by randomly permuting the elements in the original set $\mathrm{W}$, and let ${\mathrm{V}}$ be a matrix that is constructed by applying the same permutation on the columns of the original matrix $\mathrm{V}$. 
For this example, suppose that the modified set ${\mathrm{W}}$ and the modified matrix ${\mathrm{V}}$ are respectively given by ${{\mathrm{W}}=\{5,8,11,2,4,7,10,18\}}$, and
\begin{equation*}
{\mathrm{V}} = 
\begin{bmatrix}
3 & 1 & 11 & 2 & 15 & 6 & 4 & 13\\
4 & 11 & 13 & 6 & 9 & 3 & 15 & 8
\end{bmatrix}.
\end{equation*} 


Here, $R= K \pmod D =0$, $S={\gcd(D+R,R)=8}$, $n=\lfloor {K}/{D}\rfloor-1=2$, $m={R}/{S}+1=1$, and ${t={D}/{S}-1=0}$. 
Note that $L=2<S=8$.

For this example, the user's query consists of a $6\times24$ matrix $\mathrm{G}$ and a permutation $\pi$ on $\{1,\dots,24\}$. 
The matrix $\mathrm{G}$ contains three blocks $\mathrm{G}_1,\mathrm{G}_2,\mathrm{G}_3$, each of size $2\times8$, 
\begin{equation*} \label{eq:4.}
\mathrm{G} = 
\begin{bmatrix}
\mathrm{G}_1 & 0_{2\times 8} & 0_{2\times 8}\\
0_{2\times 8} & \mathrm{G}_2 & 0_{2\times 8}\\
0_{2\times 8} & 0_{2\times 8} & \mathrm{G}_3 
\end{bmatrix}.
\end{equation*}


To construct $\mathrm{G}$, the user follows a randomized procedure. That is, the user randomly selects one of the three blocks $\mathrm{G}_1,\mathrm{G}_2,\mathrm{G}_3$ (each with probability ${D}/{K}={1}/{3}$), and takes the selected block to be equal to $\mathrm{V}$. 
For this example, suppose that the user selects the block $\mathrm{G}_{2}$, and then sets $\mathrm{G}_{2}$ equal to ${\mathrm{V}}$. 
To construct the remaining blocks, namely, $\mathrm{G}_{1}$ and $\mathrm{G}_{3}$, the user randomly generates two MDS matrices, each of size $2\times8$. 
For this example, suppose $\mathrm{G}_1$ and $\mathrm{G}_3$ are given by
\begin{equation*}
\mathrm{G}_{1} = 
\begin{bmatrix}
1 & 4 & 7 & 6 & 3 & 12 & 4 & 9\\
5 & 7 & 6 & 9 & 3 & 15 & 2 & 1
\end{bmatrix},
\end{equation*}
\begin{equation*}
\mathrm{G}_{3} =
\begin{bmatrix}
9 & 13 & 2 & 10 & 7 & 1 & 15 & 3\\
9 & 11 & 12 & 3 & 13 & 13 & 7 & 10
\end{bmatrix}.   
\end{equation*} 

Next, the user constructs a permutation $\pi$ on $\{1,\dots,24\}$. 
Note that the columns $9$, $10$, $11$, $12$, $13$, $14$, $15$, $16$ of the matrix $\mathrm{G}$ are constructed based on the columns $1,\dots,8$ of the matrix ${\mathrm{V}}$, respectively, and the columns $1,\dots,8$ of ${\mathrm{V}}$ correspond respectively to the message indices in $\mathrm{W}$, i.e., ${5}$, $8$, ${11}$, $2$, $4$, $7$, ${10}$, ${18}$. 
Thus, the user constructs the permutation $\pi$ such that $\pi(5)=9$, $\pi(8)=10$, $\pi(11)=11$, $\pi(2)=12$, $\pi(4)=13$, $\pi(7)=14$, $\pi(10)=15$, $\pi(18)=16$. 
For $i\not\in \mathrm{W}$, the user then randomly chooses $\pi(i)$ subject to the constraint that $\pi$ forms a valid permutation on $\{1,\dots,24\}$. 

The user sends the matrix $\mathrm{G}$ and the permutation $\pi$ to the server as the query.
Upon receiving the user's query, the server first permutes the rows of the matrix $\mathrm{X} = [X_1^{\transpose},\dots,X_{24}^{\transpose}]^{\transpose}$ according to the permutation $\pi$ to obtain the vector $\tilde{\mathrm{X}}=\pi(\mathrm{X})$, i.e., $\tilde{X}_{\pi(i)} = X_i$ for $i\in \{1,\dots,24\}$. For this example, suppose that the matrix $\tilde{\mathrm{X}}$ is given by
\begin{align*}
& [X_{1}^{\transpose},X_{22}^{\transpose},X_{13}^{\transpose},X_{19}^{\transpose},X_{24}^{\transpose},X_{17}^{\transpose},X_{20}^{\transpose},X_{12}^{\transpose},\\
& \quad \quad
X_{5}^{\transpose},X_{8}^{\transpose},X_{11}^{\transpose},X_{2}^{\transpose},X_{4}^{\transpose},X_{7}^{\transpose},X_{10}^{\transpose},X_{18}^{\transpose},\\
& \quad \quad\quad 
X_{3}^{\transpose},X_{15}^{\transpose},X_{9}^{\transpose},X_{21}^{\transpose},X_{16}^{\transpose},X_{14}^{\transpose},X_{6}^{\transpose},X_{23}^{\transpose}]^{\transpose}.
\end{align*}
The server then computes $\mathrm{Y}=\mathrm{G}\mathrm{\tilde{\mathrm{X}}}$, and sends the matrix $\mathrm{Y}$ back to the user as the answer. 
Let $\Tilde{\mathrm{X}}_{\mathrm{T}_{1}}, \Tilde{\mathrm{X}}_{\mathrm{T}_{2}}, \Tilde{\mathrm{X}}_{\mathrm{T}_{3}}$ denote the first, second, and third ${D=8}$ rows of the matrix $\mathrm{X}$, respectively.
Note that $\tilde{\mathrm{X}}_{\mathrm{T}_1}=[X_{5}^\transpose,X_{8}^\transpose,X_{11}^\transpose,X_{2}^\transpose,X_{4}^\transpose,X_{7}^\transpose,X_{10}^\transpose, X_{18}^\transpose]^{\transpose}=\mathrm{X}_{\mathrm{W}}$ corresponds to the messages required for the user's computation. 
Thus, $\mathrm{G}_{2}\tilde{\mathrm{X}}_{\mathrm{T}_{2}}=\mathrm{V}\mathrm{X}_{\mathrm{W}}$, which is the user's demand matrix. 
Note that $\mathrm{Y} = [\mathrm{Y}_1^{\transpose},\mathrm{Y}_2^{\transpose},\mathrm{Y}_3^{\transpose}]^{\transpose}$, where $\mathrm{Y}_1 \triangleq \mathrm{G}_1 \tilde{\mathrm{X}}_{\mathrm{T}_1}$, $\mathrm{Y}_2 \triangleq \mathrm{G}_2 \tilde{\mathrm{X}}_{\mathrm{T}_2}$, and $\mathrm{Y}_3 \triangleq \mathrm{G}_3 \tilde{\mathrm{X}}_{\mathrm{T}_3}$. 
This implies that the user can recover their demand matrix $\mathrm{V}\mathrm{X}_{\mathrm{W}}$ from $\mathrm{Y}_2$. 

For this example, the GPC-PIA protocol achieves the rate $(\lfloor {K}/{D}\rfloor+ {R}/{S})^{-1} = 1/3$, whereas the optimal JPLT protocol of~\cite{HES2021JointJournal} achieves a lower rate $L/(K-D+L)=2/18$.

\end{example}


\begin{example} \label{ex:2}
\normalfont
Consider  a  scenario in which the server has ${K=24}$ messages, ${X}_1,\dots,{X}_{24}\in\mathbbmss{F}_{17}^{N}$ for any arbitrary ${N\geq1}$, and the user wishes to compute ${L=2}$ linear combinations of ${D=9}$ messages ${X}_2$, ${X}_4$, ${X}_5$, ${X}_7$, ${X}_8$, ${X}_{10}$,${X}_{11}$, ${X}_{18}$, ${X}_{23}$, say, 
\begin{align*}
Z_{1} & =2{X}_2+15{X}_4+3{X}_5+6{X}_7+{X}_8\\ & \quad +4{X}_{10}+11{X}_{11}+13{X}_{18}+9{X}_{23},	\\[0.125cm]
Z_{2} & =6{X}_2+9{X}_4+4{X}_5+3{X}_7+11{X}_8\\ & \quad +15{X}_{10}+13{X}_{11}+8{X}_{18}+{X}_{23}.
\end{align*}

Similarly as in the previous example, we modify the set $\mathrm{W}$ and the matrix $\mathrm{V}$. 
For this example, suppose that the modified set $\mathrm{W}$ and the modified matrix $\mathrm{V}$ are respectively given by $\mathrm{W}=\{10,4,8,11,7,23,18,2,5\}$, and 
\begin{equation*}
\mathrm{V}= 
\begin{bmatrix}
4 & 15 & 1 & 11 & 6 & 9 & 13 & 2 & 3\\
15 & 9 & 11 & 13 & 3 & 1 & 8 & 6 & 4
\end{bmatrix}.
\end{equation*} 

Here, ${R= K \pmod D =6}$, ${S=\gcd(D+R,R)=3}$, $n=\lfloor {K}/{D}\rfloor-1=1$, $m={R}/{S}+1=3$, and ${t={D}/{S}-1=2}$. Note that $L=2<S=3$. 
In this case, ${D\nmid K}$, and a simple partition-and-code based scheme as in Example~\ref{ex:1} cannot be used.  

For this example, the user's query consists of an $8\times24$ matrix $\mathrm{G}$ and a permutation $\pi$ on $\{1,\dots,24\}$. 
The matrix $\mathrm{G}$ is constructed using two blocks $\mathrm{G}_1$ and $\mathrm{G}_2$ of size $2\times9$ and $6\times15$, respectively, 
\begin{equation} \label{eq:4.15}
\mathrm{G} = 
\begin{bmatrix}
\mathrm{G}_1 & 0_{2\times 15}\\
0_{6\times 9} & \mathrm{G}_2
\end{bmatrix},
\end{equation} where the construction of $\mathrm{G}_1$ and $\mathrm{G}_2$ is described below.

The user randomly selects one of the blocks $\mathrm{G}_1,\mathrm{G}_2$, where the probability of selecting $\mathrm{G}_1$ is ${D/K={9}/{24}}$, and the probability of selecting $\mathrm{G}_2$ is ${(D+R)/K={15}/{24}}$. 
Depending on whether $\mathrm{G}_1$ or $\mathrm{G}_2$ is selected, the construction of each of these blocks is different. In this example, suppose the user selects $\mathrm{G}_2$. 
In this case, the user takes $\mathrm{G}_1$ to be a randomly generated MDS matrix of size $2\times 9$, say, 
\begin{equation} \label{eq:4.16}
\mathrm{G}_{1} = \begin{bmatrix}
    3 & 14 & 11 & 8 & 4 & 10 & 5 & 5 & 6\\
    12 & 16 & 3 & 4 & 6 & 3 & 7 & 15 & 4
\end{bmatrix}.
\end{equation}
To construct $\mathrm{G}_{2}$, the user first constructs a $2\times 15$ matrix $\mathrm{C} = [\mathrm{C}_1,\mathrm{C}_2,\mathrm{C}_{3},\mathrm{C}_{4},\mathrm{C}_{5}]$, where the column-blocks $\mathrm{C}_1,\dots,\mathrm{C}_5$, each of size $2\times 3$, are 
constructed as follows. The user partitions the columns of $\mathrm{V}$ into three column-blocks $\mathrm{V}_1,\mathrm{V}_2,\mathrm{V}_3$, each of size $2\times 3$, i.e., 
\[
\mathrm{V}_{1} = 
\begin{bmatrix}
    4 & \hspace{-0.125cm} 15 & \hspace{-0.125cm} 1\\
    15 & \hspace{-0.125cm} 9 & \hspace{-0.125cm} 11
\end{bmatrix},\hspace{0.125cm}
\mathrm{V}_{2} = 
\begin{bmatrix}
    11 & \hspace{-0.125cm} 6 & \hspace{-0.125cm} 9\\
    13 & \hspace{-0.125cm} 3 & \hspace{-0.125cm} 1
\end{bmatrix},\hspace{0.125cm}
\mathrm{V}_{3} = 
\begin{bmatrix}
    13 & \hspace{-0.125cm} 2 & \hspace{-0.125cm} 3 \\
    8 & \hspace{-0.125cm} 6 & \hspace{-0.125cm} 4 
\end{bmatrix}.
\]
The user then randomly chooses three indices $i_1,i_2,i_3$ from $\{1,2,3,4,5\}$, say, $i_1=1$, $i_2=3$, $i_3=5$, and takes $\mathrm{C}_{i_1}=\mathrm{C}_1 = \mathrm{V}_1$, $\mathrm{C}_{i_2} = \mathrm{C}_{3}=\mathrm{V}_2$, $ \mathrm{C}_{i_3} = \mathrm{C}_{5}=\mathrm{V}_3$.
Next, the user takes the remaining column-blocks of $\mathrm{C}$, i.e., $\mathrm{C}_2$ and $\mathrm{C}_4$, to be randomly generated matrices of size $2\times 3$ such that 
$\mathrm{C} = [\mathrm{C}_1,\mathrm{C}_2,\mathrm{C}_{3},\mathrm{C}_{4},\mathrm{C}_{5}]$ is an MDS matrix. For this example, suppose the user takes $\mathrm{C}_2$ and $\mathrm{C}_4$ as 
\begin{equation*}
\mathrm{C}_{2}=
\begin{bmatrix}
1 & 4 & 7\\
5 & 7 & 6\\
\end{bmatrix}, \quad \quad
\mathrm{C}_{4} = 
\begin{bmatrix}
    6 & 3 & 12 \\
    9 & 3 & 15 
\end{bmatrix}.
\end{equation*} Thus, the matrix $\mathrm{C}$ is given by $\mathrm{C} = [\mathrm{V}_1,\mathrm{C}_2,\mathrm{V_2},\mathrm{C_4},\mathrm{V_3}]$. 
The user then randomly chooses $t+m=5$ distinct elements $x_1,x_2,x_3,y_1,y_2$ from $\mathbbmss{F}_{17}$, say, $x_1=1$, $x_2=5$, $x_3=7$, $y_1=11$, $y_2=16$, and constructs a $3\times 2$ Cauchy matrix whose entry $(i,j)$ is given by $\omega_{i,j}\triangleq (x_i-y_j)^{-1}$, i.e., 
\[
\begin{bmatrix}
\omega_{1,1} & \omega_{1,2} \\
\omega_{2,1} & \omega_{2,2} \\
\omega_{3,1} & \omega_{3,2}
\end{bmatrix}
= \begin{bmatrix}
5 & 9\\
14 & 3\\
4 & 15
\end{bmatrix}.
\]
Next, the user constructs the matrix $\mathrm{G}_2$ as 
\begin{align*}
\mathrm{G}_{2} & = 
\begin{bmatrix}
\alpha_1\omega_{1,1}\mathrm{C}_1 & \alpha_{2}\omega_{1,2}\mathrm{C}_{2} & \alpha_{3}\mathrm{C}_{3} & 0_{2\times 3} & 0_{2\times 3}\\
\alpha_1\omega_{2,1}\mathrm{C}_1 & \alpha_{2}\omega_{2,2}\mathrm{C}_{2} & 0_{2\times 3} & \alpha_{4}\mathrm{C}_{4} & 0_{2\times 3}\\
\alpha_1\omega_{3,1}\mathrm{C}_1 & \alpha_{2}\omega_{3,2}\mathrm{C}_{2} & 0_{2\times 3} & 0_{2\times 3} & \alpha_{5}\mathrm{C}_{5}
\end{bmatrix}\\
& = \begin{bmatrix}
5\alpha_1\mathrm{C}_1 & 9\alpha_{2}\mathrm{C}_{2} & \alpha_{3}\mathrm{C}_{3} & 0_{2\times 3} & 0_{2\times 3} \\
14\alpha_1\mathrm{C}_1 & 3\alpha_{2}\mathrm{C}_{2} & 0_{2\times 3} & \alpha_{4}\mathrm{C}_{4} & 0_{2\times 3} \\
4\alpha_1\mathrm{C}_1 & 15\alpha_{2}\mathrm{C}_{2} & 0_{2\times 3} & 0_{2\times 3} & \alpha_{5}\mathrm{C}_{5}
\end{bmatrix},
\end{align*}where the (scalar) parameters $\alpha_1,\dots,\alpha_5$ are chosen  
such that by performing row-block operations on $\mathrm{G}_{2}$, the user can obtain the matrix $[\mathrm{C}_1,0_{2\times 3},\mathrm{C}_3,0_{2\times 3},\mathrm{C}_5]$. 
Note that the second and fourth column-blocks of $\mathrm{G}_{2}$, i.e., the column-blocks that contain scalar multiples of $\mathrm{C}_2$ and $\mathrm{C}_4$, do not contain any column-block of $\mathrm{V}$, and hence must be eliminated by row-block operations. 
Thus, the user randomly chooses the parameters $\alpha_{2}$ and $\alpha_{4}$ (corresponding to the second and fourth column-blocks of $\mathrm{G}_2$) from $\mathbbmss{F}_{17}\setminus{\{0\}}$, say $\alpha_{2}=2$ and $\alpha_{4}=10$. 
The parameters $\alpha_{1}$, $\alpha_{3}$, and $\alpha_{5}$ are chosen as follows. 
To perform row-block operations on $\mathrm{G}_2$, suppose that the user multiplies the first and third row-blocks of $\mathrm{G}_{2}$ by scalars $c_3$ and $c_5$, respectively, and constructs the matrix 
\begin{align*}
& c_3 \begin{bmatrix}5\alpha_1\mathrm{C}_1 & 9\alpha_2\mathrm{C}_2 & \alpha_3\mathrm{C}_3 & 0_{2\times 3} & 0_{2\times 3}\end{bmatrix} \\ & \quad +c_5\begin{bmatrix}4\alpha_1\mathrm{C}_1 & 15\alpha_2\mathrm{C}_2 & 0_{2\times 3} & 0_{2\times 3} & \alpha_5\mathrm{C}_5\end{bmatrix}\vspace{-0.75cm}
\end{align*}
\[\hspace{-0.125cm}
\begin{split}
&= 
[\begin{matrix} (5c_3+4c_5)\alpha_1\mathrm{C}_1 & (9c_3+15c_5)\alpha_2\mathrm{C}_2 \end{matrix} \\
 &\qquad\qquad\qquad\qquad\qquad \begin{matrix} c_3\alpha_3\mathrm{C}_3 & 0_{2\times 3} & c_5\alpha_5\mathrm{C}_5 \end{matrix}]
\end{split}
\]
Thus, the user can recover the matrix $[\mathrm{C}_1,0_{2\times 3},\mathrm{C}_3,0_{2\times 3},\mathrm{C}_5]$ by performing row-block operations on the matrix $\mathrm{G}_2$ so long as 
${(5c_3+4c_5)\alpha_1=1}$, ${9c_3+15c_5=0}$, ${c_3\alpha_3=1}$, and ${c_5\alpha_5=1}$. 
Note that the choice of $\omega_{i,j}$'s to be entries of a Cauchy matrix guarantees that this system of equations has a nonzero solution for all $c_3,c_5,\alpha_1,\alpha_3,\alpha_5$, and the solution is unique for any arbitrary (but fixed) value of $c_3\neq 0$.    
Choosing $c_3$ to be an arbitrary element in $\mathbbmss{F}_{17}\setminus \{0\}$, say, $c_3=1$, the user takes $c_5 = -{9}c_3/{15} = 13$. 
Given $c_3=1$ and $c_5=13$, the user then finds $\alpha_1={1}/({5c_3+4c_5})=3$, $\alpha_3={1}/{c_3}=1$, and $\alpha_5={1}/{c_5}=4$.  
Then, the user constructs $\mathrm{G}_2$ as 
\begin{align}
\mathrm{G}_{2} 
= \begin{bmatrix}
15\mathrm{C}_1 & \mathrm{C}_{2} & \mathrm{C}_{3} & 0_{2\times 3} & 0_{2\times 3}\\
8\mathrm{C}_1 & 6\mathrm{C}_{2} & 0_{2\times 3} & 10\mathrm{C}_{4} & 0_{2\times 3}\\
12\mathrm{C}_1 & 13\mathrm{C}_{2} & 0_{2\times 3} & 0_{2\times 3} & 4\mathrm{C}_{5}\\
\end{bmatrix}. 
\label{eq:4.17}
\end{align}
Combining $\mathrm{G}_{1}$ and $\mathrm{G}_{2}$ given by~\eqref{eq:4.16} and~\eqref{eq:4.17}, the user then constructs the matrix $\mathrm{G}$ as in~\eqref{eq:4.15}.

Next, the user constructs a permutation $\pi$ on $\{1,\dots,24\}$. 
Note that the columns $10$, $11$, $12$, $16$, $17$, $18$, $22$, $23$, $24$ of the matrix $\mathrm{G}$ are constructed based on the columns $1,\dots,9$ of the matrix $\mathrm{V}$, respectively, and the columns $1,\dots,9$ of $\mathrm{V}$ correspond respectively to the message indices ${10}$, $4$, $8$, ${11}$, $7$, ${23}$, ${18}$, $2$, $5$. 
Thus, the user constructs the permutation $\pi$ such that $\pi(10)=10$, $\pi(4)=11$, $\pi(8)=12$, $\pi(11)=16$, $\pi(7)=17$, $\pi(23)=18$, $\pi(18)=22$, $\pi(2)=23$, $\pi(5)=24$. 
For ${i\not\in \{2,4,5,7,8,10,11,18,23\}}$, the user then randomly chooses $\pi(i)$ subject to the constraint that $\pi$ forms a valid permutation on $\{1,\dots,24\}$. 

Then, the user sends the matrix $\mathrm{G}$ and the permutation $\pi$ to the server as the query. 
Upon receiving the user's query, the server first permutes the rows of the matrix $\mathrm{X} = [X_1^{\transpose},\dots,X_{24}^{\transpose}]^{\transpose}$ according to the permutation $\pi$ to obtain the vector $\tilde{\mathrm{X}}=\pi(\mathrm{X})$, i.e., $\tilde{X}_{\pi(i)} = X_i$ for $i\in \{1,\dots,24\}$. For this example, suppose that the matrix $\tilde{\mathrm{X}}$ is given by
\begin{align*}
& [X_{17}^{\transpose},X_{22}^{\transpose},X_{20}^{\transpose},X_{14}^{\transpose},X_{24}^{\transpose},X_{21}^{\transpose},X_{19}^{\transpose},X_{15}^{\transpose},X_{6}^{\transpose},X_{10}^{\transpose},X_{4}^{\transpose},X_{8}^{\transpose},\\
& \quad \quad X_{1}^{\transpose},X_{13}^{\transpose},X_{16}^{\transpose},X_{11}^{\transpose},X_{7}^{\transpose},X_{23}^{\transpose},X_{9}^{\transpose},X_{3}^{\transpose},X_{12}^{\transpose},X_{18}^{\transpose},X_{2}^{\transpose},X_{5}^{\transpose}]^{\transpose}.
\end{align*}
Then the server computes $\mathrm{Y}=\mathrm{G}\mathrm{\tilde{\mathrm{X}}}$, and sends the matrix $\mathrm{Y}$ back to the user as the answer. 
Let $\mathrm{T}_1 = \{1,\dots,9\}$, $\mathrm{T}_2 = \{10,11,12\}$, $\mathrm{T}_3 = \{13,14,15\}$, $\mathrm{T}_4 = \{16,17,18\}$, 
$\mathrm{T}_5 = \{19,20,21\}$, and
$\mathrm{T}_6 = \{22,23,24\}$. 
Note that $[\Tilde{\mathrm{X}}_{\mathrm{T}_2}^{\transpose},\tilde{\mathrm{X}}_{\mathrm{T}_4}^{\transpose},\tilde{\mathrm{X}}_{\mathrm{T}_6}^{\transpose}]^{\transpose} = \mathrm{X}_{\mathrm{W}}$, and $\mathrm{Y} = [\mathrm{Y}_1^{\transpose},\mathrm{Y}_2^{\transpose}]^{\transpose}$, where $\mathrm{Y}_1 \triangleq \mathrm{G}_1 \tilde{\mathrm{X}}_{\mathrm{T}_1}$, and $\mathrm{Y}_2 \triangleq \mathrm{G}_2 [\tilde{\mathrm{X}}_{\mathrm{T}_2}^{\transpose},\tilde{\mathrm{X}}_{\mathrm{T}_3}^{\transpose},\tilde{\mathrm{X}}_{\mathrm{T}_4}^{\transpose},\tilde{\mathrm{X}}_{\mathrm{T}_5}^{\transpose},\tilde{\mathrm{X}}_{\mathrm{T}_6}^{\transpose}]^{\transpose}$.
Let $\tilde{\mathrm{X}}_{\mathrm{T}}\triangleq [\tilde{\mathrm{X}}_{\mathrm{T}_2}^{\transpose},\tilde{\mathrm{X}}_{\mathrm{T}_3}^{\transpose},\tilde{\mathrm{X}}_{\mathrm{T}_4}^{\transpose},\tilde{\mathrm{X}}_{\mathrm{T}_5}^{\transpose},\tilde{\mathrm{X}}_{\mathrm{T}_6}^{\transpose}]^{\transpose}$, and 
let $\mathrm{I}$ be a $2\times 2$ identity matrix. 
Then, the user recovers $[Z^{\transpose}_1,Z^{\transpose}_2]^{\transpose} = \mathrm{V}\mathrm{X}_{\mathrm{W}}$ by computing 
\begin{align*}
& \begin{bmatrix} c_3\mathrm{I} & 0_{2\times2} & c_5\mathrm{I} \end{bmatrix}\mathrm{Y}_2 \\ 
&= \begin{bmatrix} c_3\mathrm{I} & 0_{2\times2} & c_5\mathrm{I} \end{bmatrix} \mathrm{\mathrm{G}}_{2}
\tilde{\mathrm{X}}_{\mathrm{T}}\\
& =\begin{bmatrix} c_3\mathrm{I} & 0_{2\times 2} & c_5\mathrm{I} \end{bmatrix} 
\begin{bmatrix}
15\mathrm{C}_1 &\hspace{-0.125cm} \mathrm{C}_{2} &\hspace{-0.125cm} \mathrm{C}_{3} &\hspace{-0.125cm} 0 & \hspace{-0.125cm}0\\
8\mathrm{C}_1 &\hspace{-0.125cm}\hspace{-0.12cm} 6\mathrm{C}_{2} &\hspace{-0.125cm} 0 &\hspace{-0.125cm} 10\mathrm{C}_{4} &\hspace{-0.125cm} 0\\
12\mathrm{C}_1 &\hspace{-0.12cm} 13\mathrm{C}_{2} &\hspace{-0.125cm} 0 &\hspace{-0.125cm} 0 &\hspace{-0.125cm} 4\mathrm{C}_{5}\\
\end{bmatrix}
\tilde{\mathrm{X}}_{\mathrm{T}}\\
& = \begin{bmatrix}
(15c_3+12c_5)\mathrm{C}_1 & \hspace{-0.125cm} (c_3+13c_5) \mathrm{C}_2  & \hspace{-0.125cm} c_3\mathrm{C}_3 & \hspace{-0.125cm} 0 & \hspace{-0.125cm} 4c_5\mathrm{C}_5 
\end{bmatrix} 
\tilde{\mathrm{X}}_{\mathrm{T}}\\
& = \begin{bmatrix}
\mathrm{C}_1 & 0_{2\times3} & \mathrm{C}_3 & 0_{2\times3} & \mathrm{C}_5 
\end{bmatrix} 
\tilde{\mathrm{X}}_{\mathrm{T}}\\
& = \begin{bmatrix}
\mathrm{V}_{1} & 0 & \mathrm{V}_{2} & 0 & \mathrm{V}_{3} \\
\end{bmatrix} 
\tilde{\mathrm{X}}_{\mathrm{T}}\\
& = \begin{bmatrix}
\mathrm{V}_{1} &\mathrm{V}_{2} &
\mathrm{V}_{3}
\end{bmatrix} 
\begin{bmatrix}
\tilde{\mathrm{X}}_{\mathrm{T}_2}\\
\tilde{\mathrm{X}}_{\mathrm{T}_4}\\
\tilde{\mathrm{X}}_{\mathrm{T}_6} \end{bmatrix} = \mathrm{V}\mathrm{X}_{\mathrm{W}}.
\end{align*} Recall that $c_3=1$ and $c_5=13$. Thus,  $15c_3+12c_5=1$, $c_3+13c_5 = 0$, and $4c_5=1$. 

For this example, the GPC-PIA protocol achieves the rate $(\lfloor {K}/{D}\rfloor+{R}/{S})^{-1} = 1/4$, whereas the optimal JPLT protocol of~\cite{HES2021JointJournal} 
achieves a lower rate $L/(K-D+L)=2/17$. 

\end{example}

\begin{example} \label{ex:3}
\normalfont 
Consider a scenario in which the server has ${K=24}$ messages, ${X}_1,\dots,{X}_{24}\in\mathbbmss{F}_{17}^N$ for any arbitrary ${N\geq1}$, and the user wishes to compute ${L=2}$ linear combinations of ${D=7}$ messages ${X}_2$, ${X}_4$, ${X}_7$, ${X}_{10}$, ${X}_{15}$, ${X}_{18}$, ${X}_{23}$, say,
\begin{align*}
Z_{1} & =2{X}_2+15{X}_4+6{X}_7+4{X}_{10}\\
& \quad +11{X}_{15}+13{X}_{18}+9{X}_{23},\\
Z_{2} & =6{X}_2+9{X}_4+3{X}_7+15{X}_{10}\\
& \quad +13{X}_{15}+8{X}_{18}+{X}_{23}.\\
\end{align*}
For this example, $\mathrm{W}=\{2,4,7,10,15,18,23\}$, and
\begin{equation*}
\mathrm{V} = 
\begin{bmatrix}
2 & 15 & 6 & 4 & 11 & 13 & 9\\
6 & 9 & 3 & 15 & 13 & 8 & 1
\end{bmatrix}.
\end{equation*} 
Similar to the previous examples, we modify the set $\mathrm{W}$ and the matrix $\mathrm{V}$. 
For this example, suppose that the modified set $\mathrm{W}$ and the modified matrix $\mathrm{V}$ are given by $\mathrm{W}=\{10,4,7,23,18,2,15\}$, and 
\begin{equation*}
\mathrm{V} = 
\begin{bmatrix}
15 & 4 & 6 & 9 & 13 & 2 & 11\\
9 & 15 & 3 & 1 & 8 & 6 & 13
\end{bmatrix}.
\end{equation*} 

Here, $R= K \pmod D =3$, ${S=\gcd(D+R,R)=1}$, $n=\lfloor {K}/{D}\rfloor-1=2$, and $m={R}/{L}+1=\frac{5}{2}$. 
Note that ${L=2>S=1}$.


For this example, the user's query consists of a $9\times24$ matrix $\mathrm{G}$ and a permutation $\pi$ on $\{1,\dots,24\}$, constructed as follows. 
The matrix $\mathrm{G}$ is constructed using three blocks $\mathrm{G}_1,\mathrm{G}_2,\mathrm{G}_3$ of size $2\times7$, $2\times 7$, and $5\times 10$, respectively, 
\begin{equation}\label{eq:4.18}
\mathrm{G} = 
\begin{bmatrix}
\mathrm{G}_1 & 0_{2\times 7} & 0_{2\times 10} \\
0_{2\times 7} & \mathrm{G}_2 & 0_{2\times 10} \\
0_{5\times 7} & 0_{5\times 7} & \mathrm{G}_3
\end{bmatrix},
\end{equation} where the construction of  $\mathrm{G}_1,\mathrm{G}_2,\mathrm{G}_3$ is described below.

The user randomly selects one of the blocks $\mathrm{G}_1, \mathrm{G_2},\mathrm{G}_3$, where the probability of selecting $\mathrm{G}_1$ is ${D/K={7}/{24}}$, the probability of selecting $\mathrm{G}_2$ is ${D/K={7}/{24}}$, and the probability of selecting $\mathrm{G}_3$ is ${(D+R)/K={10}/{24}}$. 
Depending on whether $\mathrm{G}_1$, $\mathrm{G}_2$, or $\mathrm{G}_3$ is selected, the construction of each of these blocks is different. 
In this example, we consider the case that the user selects $\mathrm{G}_3$. 
In this case, the user takes $\mathrm{G}_1$ and $\mathrm{G}_2$ to be two randomly generated MDS matrices, each of size $2\times 7$, say, 
\begin{equation}\label{eq:4.19}
\mathrm{G}_{1} = \begin{bmatrix}
    11 & 5 & 10 & 1 & 15 & 2 & 7 \\
    16 & 10 & 16 & 6 & 1 & 1 & 13\\
\end{bmatrix},
\end{equation}
\begin{equation}
\mathrm{G}_{2} = \begin{bmatrix}
    5 & 8 & 14 & 7 & 4 & 3 & 16\\
    3 & 5 & 8 & 1 & 6 & 2 & 15\\
\end{bmatrix}.
\end{equation}
The construction of $\mathrm{G}_3$ is as follows. 
Recall that $\mathrm{V}$ generates a $[7,2]$ MDS code. Thus, the user can obtain the parity-check matrix $\myLambda$ of the MDS code generated by $\mathrm{V}$ as 
\begin{equation*}
\myLambda = 
\begin{bmatrix}
8 & 5 & 9 & 6 & 14 & 11 & 13\\
15 & 6 & 13 & 12 & 6 & 16 & 3\\
9 & 14 & 15 & 7 & 5 & 14 & 2\\
2 & 10 & 16 & 14 & 7 & 8 & 7\\
8 & 12 & 8 & 11 & 3 & 7 & 16\\
\end{bmatrix}.
\end{equation*}
Note that $\myLambda$ itself generates a $[7,5]$ MDS code. Then, the user randomly chooses a $D=7$-subset of $\{1,\dots,10\}$, say, $\{h_1,\dots,h_7\} = \{1,3,4,6,7,8,10\}$, and randomly generates a $2\times 10$ MDS matrix $\mathrm{H}$ such that the submatrix of $\mathrm{H}$ restricted to the columns indexed by $\{h_1,\dots,h_7\}= \{1,3,4,6,7,8,10\}$ is the matrix $\myLambda$. For this example, suppose that the user constructs the matrix $\mathrm{H}$ as
\begin{equation*}
\mathrm{H} = 
\begin{bmatrix}
\mathbf{8} & 1 & \mathbf{5} & \mathbf{9} & 2 & \mathbf{6} & \mathbf{14} & \mathbf{11} & 4 & \mathbf{13} \\
\mathbf{15} & 6 & \mathbf{6} & \mathbf{13} & 3 & \mathbf{12} & \mathbf{6} & \mathbf{16} & 3 & \mathbf{3} \\
\mathbf{9} & 2 & \mathbf{14} & \mathbf{15} & 13 & \mathbf{7} & \mathbf{5} & \mathbf{14} & 15 & \mathbf{2} \\
\mathbf{2} & 12 & \mathbf{10} & \mathbf{16} & 11 & \mathbf{14} & \mathbf{7} & \mathbf{8} & 7 & \mathbf{7} \\
\mathbf{8} & 4 & \mathbf{12} & \mathbf{8} & 8 & \mathbf{11} & \mathbf{3} & \mathbf{7} & 1 & \mathbf{16}
\end{bmatrix}.
\end{equation*} Since $\mathrm{H}$ generates a $[10,5]$ MDS code, it can also be thought of as the parity-check matrix of a $[10,5]$ MDS code. 
The user then takes $\mathrm{G}_3$ to be the generator matrix of the $[10,5]$ MDS code defined by the parity-check matrix $\mathrm{H}$, 
\begin{equation}\label{eq:4.20}
\mathrm{G}_{3} = \begin{bmatrix}
\mathbf{3} & 14 & \mathbf{11} & \mathbf{8} & 4 & \mathbf{10} & \mathbf{8} & \mathbf{5} & 5 & \mathbf{6} \\
\mathbf{12} & 16 & \mathbf{3} & \mathbf{4} & 6 & \mathbf{3} & \mathbf{1} & \mathbf{15} & 8 & \mathbf{4} \\
\mathbf{14} & 11 & \mathbf{7} & \mathbf{2} & 9 & \mathbf{6} & \mathbf{15} & \mathbf{11} & 6 & \mathbf{14} \\
\mathbf{5} & 15 & \mathbf{5} & \mathbf{1} & 5 & \mathbf{12} & \mathbf{4} & \mathbf{16} & 13 & \mathbf{15} \\
\mathbf{3} & 5 & \mathbf{6} & \mathbf{9} & 16 & \mathbf{7} & \mathbf{9} & \mathbf{14} & 14 & \mathbf{10}
\end{bmatrix}.
\end{equation} 
Combining $\mathrm{G}_1,\mathrm{G}_2,\mathrm{G}_3$ given by~\eqref{eq:4.19}-\eqref{eq:4.20}, the user constructs the matrix $\mathrm{G} $ as in~\eqref{eq:4.18}.

Next, the user constructs a permutation $\pi$ on $\{1,\dots,24\}$. 
Note that the columns $15$, $17$, $18$, $20$, $21$, $22$, $24$ of $\mathrm{G}$ are constructed based on the columns $1,\dots,7$ of $\myLambda$; the columns $1,\dots,7$ of $\myLambda$ are constructed based on the columns $1,\dots,7$ of $\mathrm{V}$; and the columns $1,\dots,7$ of $\mathrm{V}$ correspond respectively to the message indices ${4}$, ${10}$, $7$, ${23}$, ${18}$, ${2}$, ${15}$. 
The user then constructs the permutation $\pi$ such that 
$\pi(4)=15$, $\pi(10)=17$, $\pi(7)=18$, $\pi(23)=20$, $\pi(18)=21$, $\pi(2)=22$, $\pi(15)=24$. 
For any $i\not\in \{2,4,7,10,15,18,23\}$, the user then randomly chooses $\pi(i)$ subject to the constraint that $\pi$ forms a valid permutation on $\{1,\dots,24\}$. 
Then, the user sends the matrix $\mathrm{G}$ and the permutation $\pi$ to the server as the query. 

Upon receiving the user's query, the server first permutes the rows of the matrix $\mathrm{X} = [X_1^{\transpose},\dots,X_{24}^{\transpose}]^{\transpose}$ according to the permutation $\pi$ to obtain the matrix $\tilde{\mathrm{X}}=\pi(\mathrm{X})$, i.e., $\tilde{X}_{\pi(i)} = X_i$ for $i\in \{1,\dots,24\}$. For this example, suppose that the matrix $\tilde{\mathrm{X}}$ is given by
\begin{align*}
& [X_{8}^{\transpose},X_{14}^{\transpose},X_{17}^{\transpose},X_{22}^{\transpose},X_{19}^{\transpose},X_{16}^{\transpose},X_{13}^{\transpose},X_{3}^{\transpose},X_{20}^{\transpose},X_{24}^{\transpose},X_{21}^{\transpose},X_{1}^{\transpose},\\ &\quad  X_{6}^{\transpose},X_{12}^{\transpose},X_{4}^{\transpose},X_{5}^{\transpose},X_{10}^{\transpose},X_{7}^{\transpose},X_{9}^{\transpose},X_{23}^{\transpose},X_{18}^{\transpose},X_{2}^{\transpose},X_{11}^{\transpose},X_{15}^{\transpose}]^{\transpose}.
\end{align*}
Then the server computes $\mathrm{Y=\mathrm{G}\mathrm{\tilde{\mathrm{X}}}}$, and sends the matrix $\mathrm{Y}$ back to the user as the answer. 
To recover their demand, the user proceeds as follows.
Let $\mathrm{T}_1 = \{1,\dots,7\}$, $\mathrm{T}_2 = \{8,\dots,14\}$, and $\mathrm{T}_3 = \{15,\dots,24\}$.
Note that $\mathrm{Y} = [\mathrm{Y}_1^{\transpose},\mathrm{Y}_2^{\transpose},\mathrm{Y}_3^{\transpose}]^{\transpose}$, where $\mathrm{Y}_1 \triangleq \mathrm{G}_1 \tilde{\mathrm{X}}_{\mathrm{T}_1}$, $\mathrm{Y}_2 \triangleq \mathrm{G}_2 \tilde{\mathrm{X}}_{\mathrm{T}_2}$, and $\mathrm{Y}_3 \triangleq \mathrm{G}_3 \tilde{\mathrm{X}}_{\mathrm{T}_3}$. 
Then, the user recovers the demand matrix $[Z^{\transpose}_1,Z^{\transpose}_2,Z^{\transpose}_3]^{\transpose} = \mathrm{V}\mathrm{X}_{\mathrm{W}} =$ 
by computing
\begin{align*}
& \begin{bmatrix}
6 & 4 & 13 & 1 & 0\\
0 & 6 & 4 & 13 & 1\\
\end{bmatrix}\mathrm{Y}_3 
\\ & = 
\begin{bmatrix}
6 & 4 & 13 & 1 & 0\\
0 & 6 & 4 & 13 & 1\\
\end{bmatrix} \mathrm{G}_3 \tilde{\mathrm{X}}_{\mathrm{T}_3}\\
& =  
\begin{bmatrix}
\mathbf{15} & 0 & \mathbf{4} & \mathbf{6} & 0 & \mathbf{9} & \mathbf{13} & \mathbf{2} & 0 & \mathbf{11} \\
\mathbf{9} & 0 & \mathbf{15} & \mathbf{3} & 0 & \mathbf{1} & \mathbf{8} & \mathbf{6} & 0 & \mathbf{13} \\
\end{bmatrix} 
\tilde{\mathrm{X}}_{\mathrm{T}_3}\\
& = 
\begin{bmatrix}
{15} & {4} & {6} & {9} & {13} & {2} & {11} \\
{9}  & {15} & {3} & {1} & {8} & {6}  & {13}
\end{bmatrix}  
\begin{bmatrix}
X_{4}\\ X_{10}\\ X_{7}\\ X_{23} \\ X_{18} \\ X_{2} \\ X_{15}
\end{bmatrix} 
=\mathrm{V} \mathrm{X}_{\mathrm{W}}.
\end{align*}
For this example, the GPC-PIA protocol achieves the rate $(\lfloor {K}/{D}\rfloor+{R}/{L})^{-1} = 2/9$, whereas the optimal JPLT protocol of~\cite{HES2021JointJournal} achieves a lower rate $L/(K-D+L)=2/19$.

\end{example}

\bibliographystyle{IEEEtran}
\bibliography{PIR_PC_Refs}

\begin{thebibliography}{10}
\providecommand{\url}[1]{#1}
\csname url@samestyle\endcsname
\providecommand{\newblock}{\relax}
\providecommand{\bibinfo}[2]{#2}
\providecommand{\BIBentrySTDinterwordspacing}{\spaceskip=0pt\relax}
\providecommand{\BIBentryALTinterwordstretchfactor}{4}
\providecommand{\BIBentryALTinterwordspacing}{\spaceskip=\fontdimen2\font plus
\BIBentryALTinterwordstretchfactor\fontdimen3\font minus
  \fontdimen4\font\relax}
\providecommand{\BIBforeignlanguage}[2]{{%
\expandafter\ifx\csname l@#1\endcsname\relax
\typeout{** WARNING: IEEEtran.bst: No hyphenation pattern has been}%
\typeout{** loaded for the language `#1'. Using the pattern for}%
\typeout{** the default language instead.}%
\else
\language=\csname l@#1\endcsname
\fi
#2}}
\providecommand{\BIBdecl}{\relax}
\BIBdecl

\bibitem{CG2015}
\BIBentryALTinterwordspacing
J.~P. Cunningham and Z.~Ghahramani, ``Linear dimensionality reduction: Survey,
  insights, and generalizations,'' \emph{Journal of Machine Learning Research},
  vol.~16, no.~89, pp. 2859--2900, 2015. [Online]. Available:
  \url{http://jmlr.org/papers/v16/cunningham15a.html}
\BIBentrySTDinterwordspacing

\bibitem{A2019}
\BIBentryALTinterwordspacing
E.~H. Aoki, ``Training multiple machine learning models and running data tasks
  in parallel via yarn + spark + multithreading,'' 2019. [Online]. Available:
  \url{https://towardsdatascience.com/how-to-train-multiple-machine-learning-models-and-run-other-data-tasks-in-parallel-by-combining-2fa9670dd579}
\BIBentrySTDinterwordspacing

\bibitem{JY2020}
\BIBentryALTinterwordspacing
I.~Jan and A.~B. Yossef, ``Training multiple machine learning models
  simultaneously using spark and apache arrow,'' 2020. [Online]. Available:
  \url{https://aws.amazon.com/blogs/apn/training-multiple-machine-learning-models-simultaneously-using-spark-and-apache-arrow/}
\BIBentrySTDinterwordspacing

\bibitem{HKRS2019}
A.~{Heidarzadeh}, S.~{Kadhe}, S.~E. {Rouayheb}, and A.~{Sprintson},
  ``Single-server multi-message individually-private information retrieval with
  side information,'' in \emph{2019 IEEE International Symposium on Information
  Theory (ISIT)}, July 2019, pp. 1042--1046.

\bibitem{HS2020}
A.~{Heidarzadeh} and A.~{Sprintson}, ``Private computation with individual and
  joint privacy,'' in \emph{2020 IEEE International Symposium on Information
  Theory (ISIT)}, 2020, pp. 1112--1117.

\bibitem{HES2021JointJournal}
\BIBentryALTinterwordspacing
A.~Heidarzadeh, N.~Esmati, and A.~Sprintson, ``Single-server private linear
  transformation: The joint privacy case,'' June 2021. [Online]. Available:
  \url{arXiv:2106.05220}
\BIBentrySTDinterwordspacing

\bibitem{BU2018}
K.~{Banawan} and S.~{Ulukus}, ``Multi-message private information retrieval:
  Capacity results and near-optimal schemes,'' \emph{IEEE Transactions on
  Information Theory}, vol.~64, no.~10, pp. 6842--6862, Oct 2018.

\bibitem{HKGRS2018}
A.~Heidarzadeh, S.~Kadhe, B.~Garcia, S.~E. Rouayheb, and A.~Sprintson, ``On the
  capacity of single-server multi-message private information retrieval with
  side information,'' in \emph{2018 56th Annual Allerton Conf. on Commun.,
  Control, and Computing}, Oct 2018.

\bibitem{LG2018}
S.~Li and M.~Gastpar, ``Single-server multi-message private information
  retrieval with side information,'' in \emph{2018 56th Annual Allerton Conf.
  on Commun., Control, and Computing}, Oct 2018.

\bibitem{MMM2019}
M.~H. Mousavi, M.~Ali Maddah-Ali, and M.~Mirmohseni, ``Private inner product
  retrieval for distributed machine learning,'' in \emph{2019 IEEE
  International Symposium on Information Theory (ISIT)}, 2019, pp. 355--359.

\bibitem{HS2019PC}
A.~{Heidarzadeh} and A.~{Sprintson}, ``Private computation with side
  information: The single-server case,'' in \emph{2019 IEEE International
  Symposium on Information Theory (ISIT)}, July 2019, pp. 1657--1661.

\bibitem{SJ2018}
H.~{Sun} and S.~A. {Jafar}, ``The capacity of private computation,'' \emph{IEEE
  Transactions on Information Theory}, vol.~65, no.~6, pp. 3880--3897, 2019.

\bibitem{MM2018}
M.~{Mirmohseni} and M.~A. {Maddah-Ali}, ``Private function retrieval,'' in
  \emph{2018 Iran Workshop on Communication and Information Theory (IWCIT)},
  April 2018, pp. 1--6.

\bibitem{OK2018}
S.~A. Obead and J.~Kliewer, ``Achievable rate of private function retrieval
  from {MDS} coded databases,'' \emph{2018 IEEE International Symposium on
  Information Theory (ISIT)}, pp. 2117--2121, 2018.

\bibitem{OLRK2018}
S.~A. Obead, H.-Y. Lin, E.~Rosnes, and J.~Kliewer, ``Capacity of private linear
  computation for coded databases,'' \emph{2018 56th Annual Allerton Conference
  on Communication, Control, and Computing (Allerton)}, pp. 813--820, 2018.

\bibitem{YLR2020}
Y.~Yakimenka, H.-Y. Lin, and E.~Rosnes, ``On the capacity of private monomial
  computation.''\hskip 1em plus 0.5em minus 0.4em\relax ETH Zurich, 02/2020
  2020, pp. 31--35.

\bibitem{MMWW2002}
\BIBentryALTinterwordspacing
H.~Marchand, A.~Martin, R.~Weismantel, and L.~Wolsey, ``Cutting planes in
  integer and mixed integer programming,'' \emph{Discrete Applied Mathematics},
  vol. 123, no.~1, pp. 397 -- 446, 2002. [Online]. Available:
  \url{http://www.sciencedirect.com/science/article/pii/S0166218X01003481}
\BIBentrySTDinterwordspacing

\bibitem{R2006}
R.~Roth, \emph{Introduction to Coding Theory}.\hskip 1em plus 0.5em minus
  0.4em\relax New York, NY, USA: Cambridge University Press, 2006.

\end{thebibliography}

\end{document}